\documentclass{article} 
\usepackage{iclr2027_conference,times}

\usepackage[utf8]{inputenc} 
\usepackage[T1]{fontenc}    
\usepackage{hyperref}       
\usepackage{url}            
\usepackage{booktabs}       
\usepackage{amsfonts}       
\usepackage{nicefrac}       
\usepackage{microtype}      
\usepackage{xcolor}         

\usepackage{wrapfig,lipsum,booktabs}
\usepackage{multirow}
\usepackage[american]{babel}
\usepackage{mathtools}
\usepackage{tikz} 
\usetikzlibrary{arrows.meta}
\usepackage{xspace}
\usepackage[font = small]{subcaption}
\usepackage{amsmath,amsthm,amsfonts,amssymb}
\usepackage{comment}
\usepackage[linesnumbered,ruled,noend]{algorithm2e}
\usepackage{parskip}
\usepackage{eso-pic}

\newtheorem{problem}{Problem}
\newtheorem{proposition}{Proposition}
\newtheorem{theorem}{Theorem}

\newtheorem{definition}{Definition}

\newtheorem{corollary}{Corollary}
\newtheorem{lemma}{Lemma}

\usepackage{amsmath,amsfonts,bm}

\def\eqref#1{equation~\ref{#1}}

\def\1{\bm{1}}

\def\rr{{\textnormal{r}}}

\DeclareMathAlphabet{\mathsfit}{\encodingdefault}{\sfdefault}{m}{sl}
\SetMathAlphabet{\mathsfit}{bold}{\encodingdefault}{\sfdefault}{bx}{n}

\newcommand{\R}{\mathbb{R}}

\newcommand{\reals}{\mathbb{R}}
\newcommand{\naturals}{\mathbb{N}}
\newcommand{\s}{\mathcal{S}}
\newcommand{\A}{\mathcal{A}}
\newcommand{\G}{\mathbb{G}}

\newcommand{\mdp}{M = (\s, \A, p, p_0, r, H)}
\newcommand{\mdpb}{\bar{M} = (\bar{\s}, \bar{\A}, \bar{p},\bar{p}_0, \bar{r}, H)}

\title{Sample Complexity of Equivariant Reinforcement Learning}

\author{%
  Rayan Mazouz\thanks{Corresponding author \texttt{rayan@newtheory.ai}}~, 
  ~Haibo Zhao,
  ~Chris Hillar,
  ~Christian Shewmake
  \\
  New Theory AI\\
  San Francisco, CA 94110 
}

\iclrfinalcopy 
\begin{document}

\maketitle

\begin{abstract}
Reinforcement learning (RL) is a powerful framework for robotic control, yet its practical application is often hindered by high sample complexity. This is particularly restrictive in physical domains where interaction data is costly. While the world often exhibits geometric and physical symmetries, standard RL algorithms typically fail to exploit this structure. In this paper, we demonstrate that exploiting group symmetries significantly reduces the sample complexity of RL. Focusing on finite-horizon Markov decision processes, we find that leveraging homomorphisms induced by group symmetries significantly reduces the theoretical upper and lower bounds on the number of environment interactions required to reach an optimal return. We further extend these bounds to continuous state and action spaces, providing corresponding sample-complexity guarantees under appropriate regularity assumptions. Beyond theory, we validate our findings through controlled experiments and demonstrate the advantages of symmetry-aware policy learning on high-dimensional continuous robotic simulations. Our results show that integrating symmetry into the learning pipeline yields substantial gains in sample efficiency and performance, offering a principled path toward more data-efficient robotics.
\end{abstract}

\section{Introduction}

Reinforcement learning (RL) is a prominent framework for learning decision-making policies directly from interaction, with notable successes in robotics~\cite{levine2016end, kalashnikov2018qtopt, mazouz2022safety}. Recent advances in learning-enabled systems have shifted the paradigm from carefully engineered control and perception pipelines toward data-driven approaches that learn from high-dimensional sensory inputs. While these methods have demonstrated strong empirical performance, their deployment in real-world robotic systems remains constrained by poor sample efficiency and limited generalization~\cite{zhang2025improving, ngo2025scaling}. One source of this inefficiency is that many environments possess geometric structure that is not explicitly used by standard learning architectures \cite{van2020mdp}. When symmetry-related configurations are represented as distinct inputs, an agent may need to relearn the same behavior across redundant representations of the environment, increasing the amount of data required for learning. This motivates the use of equivariance as a principled mechanism for exploiting latent structure in RL.

In parallel, a growing body of work in geometric deep learning has shown that encoding symmetry through equivariant architectures can improve generalization and sample efficiency~\cite{bronstein2017geometric, bronstein2021geometric, cohen2016group, monti2017geometric, papillon2025beyond, shewmake2023group, vadgama2025probing, islam2025platonic}. These advances have primarily focused on supervised and representation learning, where the role of symmetry can be studied without the additional complications of exploration, sequential dependence, and policy-induced distribution shift~\cite{zhang2016understanding, kumar2020conservative}. In RL, symmetry-aware methods and Markov Decision Process (MDP) homomorphisms have also been explored~\cite{van2020mdp, li2006towards, ravindran2003smdp}, but existing work does not quantify how symmetry-induced reductions translate into sample-complexity gains in discrete and continuous domains. In this work, we address this gap by studying the role of symmetry as an inductive bias. Our central premise is that symmetries of the MDP induce equivalence classes over states and actions. When these equivalences are exploited, the agent can share information across equivalent problems rather than learning over redundant indexed representations, reducing the effective size of the learning problem. 

Fig.~\ref{fig:headline} illustrates this reduction through an orbit-collapse view. A homomorphism $h$ maps the original MDP $M$ to a reduced MDP $\bar{M}$ by collapsing symmetry-induced equivalence classes into representative states and actions.
In the figure, each colored orbit in $M$ contains several rotated versions of the same underlying ring configuration, while the corresponding state in $\bar{M}$ represents the entire orbit. The optimal policy found in the reduced MDP corresponds to an optimal policy in the original MDP~\cite{van2020mdp}. Thus, the agent may learn over the smaller quotient problem while still obtaining a policy that is optimal for the original space. This reduction is directly reflected in the sample complexity: the number of episodes required to reach the same performance level decreases from $\mathcal{\tilde{O}}(k_M)$ in the original MDP to $\mathcal{\tilde{O}}(k_{\bar{M}})$ in the quotient MDP.

\usetikzlibrary{calc}

\definecolor{ringblue}{RGB}{78,121,167}
\definecolor{ringgreen}{RGB}{89,161,79}
\definecolor{ringred}{RGB}{225,87,89}
\definecolor{ringneutral}{RGB}{210,210,210}
\definecolor{ringlight}{RGB}{180, 180, 180} 
\newcommand{\RingStateA}[3]{%
  \begin{scope}[shift={(#1,#2)}, rotate=#3]

    \def\rr{0.28}

    \foreach \k in {1,...,8}{
      \coordinate (p\k) at ({90-45*(\k-1)}:\rr);
    }

    \draw[ringlight,line width=0.9pt]
      (p1)--(p2)--(p3)--(p4)--
      (p5)--(p6)--(p7)--(p8)--cycle;

    \node[
      circle,
      fill=ringred,
      draw=black,
      line width=0.3pt,
      minimum size=5pt,
      inner sep=0pt
    ] at (p1) {};

    \node[
      circle,
      fill=ringgreen,
      draw=black,
      line width=0.3pt,
      minimum size=5pt,
      inner sep=0pt
    ] at (p5) {};

    \foreach \k in {2,3,4,6,7,8}{
      \node[
        circle,
        fill=ringneutral,
        draw=black,
        line width=0.3pt,
        minimum size=5pt,
        inner sep=0pt
      ] at (p\k) {};
    }

  \end{scope}
}

\newcommand{\RingStateB}[3]{%
  \begin{scope}[shift={(#1,#2)}, rotate=#3]

    \def\rr{0.28}

    \foreach \k in {1,...,8}{
      \coordinate (q\k) at ({90-45*(\k-1)}:\rr);
    }

    \draw[ringlight,line width=0.9pt]
      (q1)--(q2)--(q3)--(q4)--
      (q5)--(q6)--(q7)--(q8)--cycle;

    \foreach \k in {1,3}{
      \node[
        circle,
        fill=ringred,
        draw=black,
        line width=0.3pt,
        minimum size=5pt,
        inner sep=0pt
      ] at (q\k) {};
    }
    
    \node[
      circle,
      fill=ringgreen,
      draw=black,
      line width=0.3pt,
      minimum size=5pt,
      inner sep=0pt
    ] at (q5) {};
    
    \foreach \k in {2,4,6,7,8}{
      \node[
        circle,
        fill=ringneutral,
        draw=black,
        line width=0.3pt,
        minimum size=5pt,
        inner sep=0pt
      ] at (q\k) {};
    }
    
  \end{scope}
}

\begin{figure}[t]
\centering
\resizebox{0.7\textwidth}{!}{%
\begin{tikzpicture}[x=1.15cm,y=1.15cm]

\tikzset{
  orbitA/.style={draw=ringblue, line width=2pt},
  orbitB/.style={draw=ringred, line width=2pt},
  mapline/.style={draw=black!65, line width=0.75pt},
  repA/.style={circle, fill=ringblue, draw=ringblue, minimum size=6.5pt, inner sep=0pt},
  repB/.style={circle, fill=ringred, draw=ringred, minimum size=6.5pt, inner sep=0pt}
}

\draw[line width=0.85pt, rounded corners=22pt]
  (-8.35,-1.85)
  -- (-8.35, 2.55)
  -- (-0.75, 2.55)
  -- (-0.75,-1.85)
  -- cycle;

\draw[line width=0.85pt, rounded corners=20pt]
  (1.25,-1.65)
  -- (1.25, 2.55)
  -- (3.45, 2.55)
  -- (3.45,-1.65)
  -- cycle;

\node[font=\normalsize] at (-5.70,3.45) {MDP $M$};
\node[font=\normalsize] at ( 2.35,3.45) {Quotient MDP $\bar{M}$};

\begin{scope}[shift={(-5.85, 1.20)}, rotate=17]
  \draw[orbitA] (0,0) ellipse [x radius=1.55, y radius=1.0];

  \coordinate (T1) at (-2.02, 0.13);
  \coordinate (T2) at ( 0.87, 0.67);
  \coordinate (T3) at ( 1.57,-0.48);

    \RingStateA{-0.75}{-1.2}{0}
    \RingStateA{ -0.87}{ 1.15}{45}
    \RingStateA{ 1.77}{-0.48}{110}
\end{scope}

\begin{scope}[shift={(-3.55,-0.50)}, rotate=17]
  \draw[orbitB] (0,0) ellipse [x radius=1.55, y radius=1.0];

  \coordinate (B1) at (-1.93,-0.25);
  \coordinate (B2) at ( 0.18, 0.74);
  \coordinate (B3) at ( 1.86,-0.31);

     \RingStateB{-1.93}{-0.15}{20}
    \RingStateB{ 0.18}{ 0.64}{70}
    \RingStateB{ 1.86}{-0.31}{135}
\end{scope}

\node[repA] (RepTop) at (2.25, 1.45) {};
\node[repB] (RepBot) at (2.25,-0.45) {};

\RingStateA{2.80}{1.45}{0}
\RingStateB{2.80}{-0.45}{0}

\draw[-{Latex[length=2.2mm]}, line width=0.8pt]
  (-0.30,0.70) -- (0.95,0.70);

\node[font=\normalsize] at (0.33,0.97) {$h$};

\node[font=\small] at (0.33,0.35) {homomorphism};

\draw[mapline] ($(T1)+(+1.5,-0.60)$) -- (RepTop);
\draw[mapline] ($(T2)+(0.4,+0.07)$) -- (RepTop);

\draw[mapline] ($(B1)+(1.5,-0.25)$) -- (RepBot);
\draw[mapline] ($(B2)+(1.2, 0.08)$) -- (RepBot);

\node[font=\normalsize] at (-5.65,-2.95)
  {$\text{Sample Complexity } \tilde{O}(k_M)$};

\node[font=\normalsize] at ( 2.30,-2.95)
  {$\text{Sample Complexity } \tilde{O}(k_{\bar{M}})$};

\end{tikzpicture}
}
\caption{
The homomorphism $h$ maps equivalent states in $M$ to representative states in the quotient MDP $\bar{M}$, reducing the effective sample complexity from $\mathcal{\tilde{O}}(k_M)$ to $\mathcal{\tilde{O}}(k_{\bar{M}})$ with $k_{\bar{M}} \ll k_M$.}
\label{fig:headline}
\end{figure}

In this work, to formally reason about sample complexity, we formalize MDPs that admit group symmetries via a \emph{graph-theoretic} framework. This perspective exposes the symmetry-induced equivalence classes and provides the basis for constructing the corresponding quotient MDP (Fig.~\ref{fig:headline}). We show that learning on the quotient MDP requires fewer episodes to obtain a near-optimal policy for the original MDP than learning directly over the unreduced state-action space. Building on the PAC framework for episodic RL~\cite{dann2015sample}, we derive \emph{upper} and \emph{lower} bounds that make the dependence of sample complexity on symmetry explicit. We further extend this analysis to \emph{continuous state-action} spaces, deriving sample-complexity guarantees in terms of symmetry-reduced covering numbers under suitable regularity assumptions. These bounds show that, under suitable structural conditions, symmetry changes the effective scaling of the learning problem, yielding improvements that grow with the number of redundant representations identified by symmetry. Beyond theory, we demonstrate the practical implications through designed graph-structured experiments and broader RL benchmarks. By comparing standard and equivariant learning algorithms under identical settings, we show that exploiting latent structure improves learning efficiency and performance.

\paragraph{Contributions.}
In summary, the main contributions of this paper are as follows:
\begin{itemize}
    \item A formal framework for analyzing the sample complexity of reinforcement learning under symmetry through the lens of graph-based MDPs.
    \item Sample-complexity guarantees for continuous state-action spaces in terms of symmetry-reduced covering numbers.
    \item Upper and lower sample-complexity bounds that make the role of symmetry explicit, showing that exploiting symmetry changes the effective scaling of the learning problem.
    \item Empirical results and benchmarks demonstrating that these theoretical predictions translate into consistent improvements in learning efficiency and performance.
\end{itemize}

\paragraph{Related Work.}
\emph{Sample complexity} in RL has been extensively studied in multi-armed bandits, infinite-horizon MDPs, and episodic fixed-horizon settings~\cite{lai1985asymptotically, auer2002finite, kakade2003sample, strehl2006pac}. Early PAC analyses established finite-sample guarantees for exploration in MDPs, often with strong dependencies on the horizon, state space, and action space, with subsequent work improving these dependencies through refined optimistic and upper confidence bound methods~\cite{auer2009near, jaksch2010near, fiechter1994efficient}. Sample complexity has also been studied in continuous state-action spaces through regularity assumptions and covering arguments~\cite{maran2024projection}, while recent work has considered scaling laws relating data and compute~\cite{rybkin2025value}. However, these analyses do not quantify how group symmetries reduce the effective complexity of the learning problem. A complementary line of work studies structural abstractions, including MDP homomorphisms~\cite{li2006towards, ravindran2003smdp} and symmetry-aware representations such as homomorphic networks and group-equivariant RL~\cite{van2020mdp, finzi2021practical}. Similarly, graph-based representations have been widely used to model structured decision-making problems~\cite{battaglia2018relational,jiang2018graph,albrecht2024multi}. While these approaches exploit symmetry and structure algorithmically, they do not characterize the resulting reduction in sample complexity. This paper closes this gap by deriving symmetry-dependent sample-complexity guarantees for discrete and continuous state-action spaces.

\section{Problem Formulation}

In this work, we are interested in the analysis of the sample complexity of RL algorithms when the underlying environment contains symmetry-induced structure. We begin by reviewing the standard finite-horizon Markov Decision Process (MDP) setting, in accordance with the following definition. \\


\begin{definition}[Markov Decision Process]
\label{def:mdp}
A Markov Decision Process (MDP) is a tuple $\mdp$, where $\s$ and $\A$ are finite state and action spaces. Here, $\A$ denotes the ambient action set; in settings with state-dependent admissibility, the transition and reward functions are evaluated only on admissible pairs $(s,a)$ with $a\in\A(s)$. The function $p : \s \times \A \times \s \to [0,1]$ is a stationary transition kernel satisfying $\sum_{s' \in \s} p(s' \mid s,a) = 1$ for all admissible pairs $(s,a)$. The initial state distribution is $p_0 : \s \to [0,1]$ with $\sum_{s \in \s} p_0(s) = 1$. The reward function $r$ is possibly time-dependent and takes values in $\reals$. The horizon $H \in \naturals_{\geq 1}$ denotes the finite decision horizon.
\end{definition}

The agent interacts with this MDP over episodes of fixed horizon $H$. At time $t \in \{1,\dots,H\}$, the agent observes a state $s_t \in \s$ and selects an action $a_t = \pi_t(s_t)$ according to a (possibly time-dependent) policy $\pi = \{\pi_t\}_{t=1}^H$, where $\pi_t : \s \to \A$. The next state is sampled as $s_{t+1} \sim p(\cdot \mid s_t, a_t)$, and the initial state is drawn as $s_1 \sim p_0$.

At each time $t$, the agent receives a reward drawn from a distribution with mean $r_t(s_t, a_t)$, determined by the reward function. It is assumed that the reward function $r$ is known to the agent, whereas the transition kernel $p$ is unknown. The performance of a policy $\pi$ on $M$ is defined as the \emph{total expected reward} of an episode
$
    R_{M}^{\pi} = \mathbb{E}
    \!
    \left[
    \sum_{t=1}^H r_t(s_t,a_t)
    \right].
$
For a policy $\pi$ and transition kernel $p$, the sequence $\{s_t\}_{t=1}^{H}$ is a Markov process with a well-defined probability measure \cite{bertsekas1996stochastic}.
The expectation in $R_M^\pi$ is taken with respect to this measure.
For this setup, sample complexity can be studied through \emph{probably approximately correct} (PAC) bounds. \\

\begin{definition}[Sample Complexity]
\label{def:sc}
Let $M$ be an MDP as in Definition~\ref{def:mdp}, and let $\pi^\star$ denote an optimal policy with reward value $R_M^{\pi^\star}$. For accuracy $\epsilon > 0$ and confidence $\delta \in (0,1)$, the \emph{sample complexity} of a learning algorithm is the number of episodes in which the policy executed by the algorithm is $\epsilon$-suboptimal, i.e.,
$R_M^{\pi_t} < R_M^{\pi^\star} - \epsilon$,
with probability at least $1-\delta$.
\end{definition}

The goal of this work is to understand how this sample complexity changes when the MDP contains symmetry-induced redundancy in its representation. In what follows, we formalize the problem.\\

\begin{problem}
    \label{problem}
    Let $M$ be a finite-horizon MDP. For given $\epsilon > 0$ and $\delta \in (0,1)$, we seek to quantify how exploiting symmetry-induced redundancy changes the sample complexity.
\end{problem}

\paragraph{Approach}
Our approach to Problem~\ref{problem} is to make explicit the transition structure underlying the MDP. 
This perspective views the MDP as an action-labeled transition graph: states form the nodes, and actions label the possible transitions between states. We then study settings in which this transition graph admits group symmetry. These symmetries identify redundant state-action representations, reducing the effective size of the learning problem. This leads to upper and lower sample-complexity bounds that depend explicitly on the available group symmetry, and provides the principled basis for comparing structured learning with learning over the unreduced representation.

\section{Preliminaries}
\label{sec:preliminaries}

In this section, we show that for symmetric environments, standard MDP formulations suffer from an explosion of the state space by naively treating structurally equivalent configurations as distinct entities. To formalize this redundancy, we use graph representations of the environment, which naturally expose its connectivity and symmetries. We then relate these symmetries to MDP homomorphisms.

\subsection{Graph-Based MDP Formulation}

While a discrete MDP can be represented by its transition graph, we formulate the MDP directly over a directed graph $\G=(V,E)$ representing the environment's connectivity. In this setting, an MDP state is a configuration of features across the vertices of $\G$, together with the agent's spatial location. \\

\begin{definition}[Graph-Valued State Space]
\label{def:graph_state_space}
Let $X$ be a finite alphabet of node features, represented as one-hot vectors. The graph-valued state space is
$ \mathcal{S}:=X^V\times V.$ Each state \(s\in\mathcal{S}\) is written as $ s=(x,q), $ where \(x:V\to X\) assigns a value \(x(v)\in X\) to each node \(v\in V\), and \(q\in V\) denotes the current position of the agent.
\end{definition}

Intuitively, picture $\G$ as a game board (e.g., a Sokoban room or a chessboard): $x$ describes the current arrangement of the pieces (the state of the board), $q$ marks where the agent is standing. The topology $\G$ is fixed across all states; the labelling $x$ and position $q$ are what change from one state to the next. With the graph-valued state space specified, we now define the admissible action set for each state. \\

\begin{definition}[Graph Action Space]
\label{def:graph_action_space}
For any state \(s=(x,q)\), an action corresponds to selecting a neighboring node reachable from the agent's  node \(q\). Thus, the admissible action set is $\A(s):=\{v\in V:(q,v)\in E\}.$
Equivalently, each action may be identified with an admissible edge \((q,v)\in E\).
\end{definition}

We now specify the stochastic dynamics and objective of the graph-based MDP. \\

\begin{definition}[Graph Transition Kernel]
\label{def:graph_transition_kernel}
For any state \(s=(x,q)\in\mathcal{S}\) and action \(a\in \A(s)\), the kernel \(p(s'\mid s,a)\) denotes the probability of transitioning to a successor state \(s'=(x',q')\in\mathcal{S}\). This kernel characterizes the joint evolution of the graph valuation \(x\) and the agent location \(q\).
\end{definition}

There is a directed edge \(s\xrightarrow{a}s'\) in the state-transition graph whenever \(p(s'\mid s,a)>0\). Similarly, we define the reward structure associated with the graph-based MDP. \\

\begin{definition}[Graph-Based Reward Function]
\label{def:graph_reward_function}
The graph-based reward function is a possibly time-dependent mean reward map
$
r_t:\{(s,a):s\in\mathcal{S},\,a\in \A(s)\}\to\reals.
$
For a state $s=(x,q)\in\mathcal{S}$ and an action $a\in \A(s)$, the scalar $r_t(s,a)$ denotes the expected reward obtained at time $t$.
\end{definition}

Definitions~\ref{def:graph_state_space}--\ref{def:graph_reward_function} define the full graph-based MDP tuple. With a slight abuse of notation, we use the same representation $\mdp$ for this graph-based MDP, since the underlying MDP components remain unchanged, i.e., the representation is now interpreted through the graph structure. Next, we focus on group symmetries as a special type of MDP symmetry. \\

\begin{definition}[Group Symmetry of a Graph-Based MDP]
\label{def:graph_mdp_group_symmetry}
Let $M$ be a graph-based MDP whose state and action spaces are induced by graph $\G=(V,E)$. A group $G$ is a symmetry group of $M$ if each $g\in G$ acts on $V$ in a way that preserves the graph connectivity and induces maps on states and actions such that, for every state $s=(x,q)$ and admissible action $a\in\A(s)$,
$g\cdot s=(g\cdot x,g\cdot q)\in\mathcal{S}$
and
$g\cdot a\in\A(g\cdot s)$.
Moreover, the action preserves the MDP rewards and transition probabilities: for all successor states $c\in\mathcal{S}$,
$r_t(g\cdot s,g\cdot a)=r_t(s,a)$
and
$p(g\cdot c\mid g\cdot s,g\cdot a)=p(c\mid s,a)$.
\end{definition}

This is the standard notion of a group-invariant MDP~\cite{wangmathrm}, specialized here to graph-based MDPs. For further background, we refer the reader to Appendix~\ref{app:groups}. We finally define the orbit and stabilizer associated with a group action, which we will use to construct the quotient MDP. \\

\begin{definition}[Orbit and Stabilizer] \label{def:orbit-stabilizer} Let a finite group $G$ act on a set $\Omega$. For $\omega \in \Omega$, the \emph{orbit} and \emph{stabilizer} of $\omega$ are $\mathcal{G}_\omega \coloneqq \{g \cdot \omega : g \in G\}$ and $G_\omega \coloneqq \{g \in G : g \cdot \omega = \omega\} \leq G$, respectively. The orbit--stabilizer theorem~\citep[Thm.~5.1]{dummit2004abstract} gives $|\mathcal{G}_\omega| = |G|/|G_\omega|$. \\ 
\end{definition}

\begin{proposition}[Quotient MDP]
\label{prop:quotient-size}
Let a finite group $G$ act on the state space $\s$ and action space $\A$ of $M$. For $(s,a)\in\s\times\A$, let
$\mathrm{Succ}(s,a)=\{s'\in\s:p(s'\mid s,a)>0\}$
denote the set of possible successor states, and define
$C=\max_{s,a}|\mathrm{Succ}(s,a)|\leq|\s|$.
The quotient sizes are determined by the corresponding orbit sizes:
$|\bar{\s}|=|\s|/|\mathcal{G}_{\s}|$, $|\bar{\A}|=|\A|/|\mathcal{G}_{\A}|$, and $\bar C=C/|\mathcal{G}_C|$,
where $\mathcal{G}_{\s}$, $\mathcal{G}_{\A}$, and $\mathcal{G}_C$ denote the orbits of representative states, actions, and successor states, respectively.
\end{proposition}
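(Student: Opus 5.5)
The plan is to build the quotient MDP $\bar M$ explicitly as the orbit space of the $G$-action and then count orbits with the orbit--stabilizer theorem (Definition~\ref{def:orbit-stabilizer}). First, define the equivalence relation $s\sim s'$ iff $s'=g\cdot s$ for some $g\in G$. This is an equivalence relation because $G$ is a group: identity, inverses, and closure under composition give reflexivity, symmetry, and transitivity. Set $\bar{\s}=\s/G$, and define $\bar{\A}$ and the successor sets in the same way, using the induced action on actions from Definition~\ref{def:graph_mdp_group_symmetry}. Next, check that $\bar M$ is well defined. The invariance $r_t(g\cdot s,g\cdot a)=r_t(s,a)$ and $p(g\cdot c\mid g\cdot s,g\cdot a)=p(c\mid s,a)$ gives orbit-level quantities $\bar r_t([s],[a])$ and $\bar p([c]\mid[s],[a])=\sum_{c'\in[c]}p(c'\mid s,a)$ that do not depend on the chosen representatives. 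So the quotient map $h$ is an MDP homomorphism in the sense of~\cite{van2020mdp}.

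For the counting, I partition $\s$ into its disjoint orbits, giving $|\s|=\sum_{O\in\bar\s}|O|$. By orbit--stabilizer, each orbit has size $|G|/|G_\omega|$. The clean identity $|\bar\s|=|\s|/|\mathcal{G}_\s|$ follows whenever every orbit has the same size $|\mathcal{G}_\s|$, which holds for instance when the action is free and $|\mathcal{G}_\s|=|G|$. I would state this as the intended reading of ``orbits of representative states'': $|\mathcal{G}_\s|$ is the common (or representative) orbit size. In general I would record the exact relation $|\s|=\sum_{O}|O|$, and note that $|\s|/\max_O|O|\le|\bar\s|\le|\s|/\min_O|O|$. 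The same argument applies verbatim to $\bar{\A}$.

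For $\bar C$, fix $(s,a)$ and consider $\mathrm{Succ}(s,a)$. The stabilizer $G_{(s,a)}$ maps this set to itself: if $g\cdot s=s$ and $g\cdot a=a$, then $p(g\cdot c\mid s,a)=p(c\mid s,a)>0$. So $G_{(s,a)}$ acts on $\mathrm{Succ}(s,a)$, and the reduced successor count is the number of its orbits. The same partition and orbit--stabilizer argument then gives $|\mathrm{Succ}(s,a)|/|\mathcal{G}_C|$, and maximizing over $(s,a)$ yields $\bar C=C/|\mathcal{G}_C|$ under the same uniform-orbit convention. The bound $C\le|\s|$ is immediate from $\mathrm{Succ}(s,a)\subseteq\s$.

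The main obstacle is that orbits need not all have equal size, so the equalities hold literally only under uniform orbit sizes. A closely related subtlety is identifying which group acts on the successor sets: it must be the stabilizer of $(s,a)$, not all of $G$. I would handle this by making the uniform-orbit assumption explicit and giving Burnside's lemma, $|\bar\s|=\frac{1}{|G|}\sum_{g}|\mathrm{Fix}(g)|$, as the general exact count.
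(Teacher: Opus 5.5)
Your counting for $\bar{\s}$ and $\bar{\A}$ follows the paper's argument (partition into orbits, divide by the orbit size), but states it properly. The paper's one-line proof tacitly assumes all orbits have equal size, and your uniform-orbit convention, two-sided bound and Burnside's lemma say what is actually true.

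The genuine gap is the $\bar C$ step. In the quotient you build, the successors of $([s],[a])$ are the classes $[c]$ with $\bar p([c]\mid[s],[a])=\sum_{c'\in[c]}p(c'\mid s,a)>0$, where $[c]$ is a full $G$-orbit. So the reduced branching at $(s,a)$ is the number of $G$-orbits meeting $\mathrm{Succ}(s,a)$, i.e.\ the size of the image of $\mathrm{Succ}(s,a)$ under $\s\to\s/G$. It is not the number of $G_{(s,a)}$-orbits on $\mathrm{Succ}(s,a)$. The latter can be strictly larger, because $c$ and $g\cdot c$ with $g\notin G_{(s,a)}$ are merged in $\bar M$ but lie in different $G_{(s,a)}$-orbits.

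For instance, let $G=\mathbb{Z}/2\mathbb{Z}$ swap $s\leftrightarrow s'$ and $c_1\leftrightarrow c_2$ and fix a single action $a$. Take $p(c_1\mid s,a)=p(c_2\mid s,a)=\tfrac12$, the mirrored kernel at $s'$, and $c_1,c_2$ absorbing. Then $G_{(s,a)}$ is trivial and your count gives $2$, whereas $\bar p([c_1]\mid[s],a)=1$, so $\bar C=1=C/|\mathcal{G}_C|$ with $|\mathcal{G}_C|=2$, as the paper intends. More generally, whenever $G$ acts freely on $\s$ your argument gives no successor reduction at all.

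So your remark that the relevant group ``must be the stabilizer of $(s,a)$, not all of $G$'' is backwards. $G_{(s,a)}$ is what preserves $\mathrm{Succ}(s,a)$ as a set, but $\bar C$ is governed by the full $G$. Because $\mathrm{Succ}(s,a)$ is not $G$-stable, orbit--stabilizer gives no division identity here. The formula needs the explicit convention that every $G$-orbit meeting $\mathrm{Succ}(s,a)$ does so in exactly $|\mathcal{G}_C|$ points, which is what the paper's ``reduced by the corresponding orbit sizes'' silently assumes. Your stabilizer count is only an upper bound on $\bar C$.

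A second, smaller error is the well-definedness claim. With $\bar{\s}=\s/G$ and $\bar{\A}=\A/G$ formed independently, changing representatives gives $r_t(g_1\cdot s,g_2\cdot a)=r_t(s,g_1^{-1}g_2\cdot a)$. The diagonal invariance of Definition~\ref{def:graph_mdp_group_symmetry} does not make this equal to $r_t(s,a)$, and $g_1^{-1}g_2\cdot a$ need not even be admissible at $s$; $\bar p$ has the same problem. Invariance only identifies pairs moved by a common $g$. The well-defined quotient therefore lives on $G$-orbits of pairs $(s,a)$, i.e.\ the state-dependent maps $\alpha_s$ of Definition~\ref{def:mdp-homomorphism}, under which actions at a fixed orbit representative $\hat s$ are identified only modulo its stabilizer $G_{\hat s}$.

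This leaves the bare count $|\A/G|=|\A|/|\mathcal{G}_{\A}|$ intact, which is all the proposition literally asserts. But your claim that $h$ is an MDP homomorphism onto $\s/G\times\A/G$ is not justified. In effect you have the two groups swapped: the stabilizer belongs on the action side, and the full group on the successor side.
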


\begin{proof}
By Definition~\ref{def:orbit-stabilizer}, quotienting identifies elements in the same orbit. Thus, the numbers of distinct state, action, and successor representatives are reduced by the corresponding orbit sizes.
\end{proof}

\subsection{MDP Homomorphisms}
The graph-based formulation above makes explicit the structure underlying the MDP. In particular, the topology of \(\G=(V,E)\) determines which actions are admissible from each state, while the graph valuation \(x:V\to X\) determines the configuration of local features across the environment. This structure allows us to compare states not only as distinct elements of \(\mathcal{S}\), but also according to whether they represent the same configuration up to a symmetry of the underlying graph.

For symmetric environments, this distinction is crucial. Two states \(s=(x,q)\) and \(s'=(x',q')\) may be different elements of \(\mathcal{S}=X^V\times V\), while still being structurally equivalent under a symmetry of \(\G\). A standard tabular MDP treats such states as unrelated, even when they induce identical rewards and transition behavior after relabeling the graph. This redundancy is best captured by MDP homomorphisms, which are structure-preserving maps from an original MDP to an abstract MDP~\cite{van2020mdp}, as conceptually visualized in Fig.~\ref{fig:quotient}. \\

\begin{figure}[t]
\centering
\resizebox{\linewidth}{!}{%
\begin{tikzpicture}[
    >=Latex,
    font=\small,
    state/.style={
        circle,
        draw,
        thick,
        minimum size=8mm,
        inner sep=0pt
    },
    qstate/.style={
        circle,
        draw,
        thick,
        minimum size=10mm,
        inner sep=0pt
    },
    act/.style={
        draw,
        rounded corners=2pt,
        thick,
        minimum width=9mm,
        minimum height=6mm,
        inner sep=2pt
    },
    orbit1/.style={fill=blue!12},
    orbit2/.style={fill=green!12},
    maparrow/.style={->, very thick},
    edge1/.style={thick, blue!60!black},
    edge2/.style={thick, green!50!black}
]

\node[font=\bfseries] at (0.7,4.8) {Original MDP $\mdp$};

\node[font=\bfseries] at (-0.3,4.0) {State graph \(\s\)};

\node[state, orbit1] (s1) at (-1.4,2.8) {$s_1$};
\node[state, orbit1] (s3) at ( 0.8,2.8) {$s_3$};
\node[state, orbit2] (s2) at (-1.4,0.8) {$s_2$};
\node[state, orbit2] (s4) at ( 0.8,0.8) {$s_4$};

\draw[thick] (s1) -- (s2) -- (s3) -- (s4) -- (s1);

\draw[edge1, dashed] (s1) -- (s3);
\draw[edge2, dashed] (s2) -- (s4);


\node at (-0.3,-0.6) {\(|\s|=4\)};

\node[font=\bfseries] at (3.3,4.0) {Action space \(\A\)};

\node[act, orbit1] (L) at (3.3,2.7) {$L$};
\node[act, orbit1] (R) at (4.7,2.7) {$R$};
\node[act, orbit2] (U) at (3.3,1.3) {$U$};
\node[act, orbit2] (D) at (4.7,1.3) {$D$};


\node at (4.0,-0.6) {\(|\A|=4\)};

\draw[maparrow] (5.9,2.0) -- (8.6,2.0);
\node[above] at (7.25,2.25) {homomorphism \(h\)};
\node[below, align=center] at (7.25,1.35) {%
\(h(s,a)=(\bar{s},\bar{a})\)
};

\node[font=\bfseries] at (11.0,4.8) {Quotient MDP $\mdpb$};

\node[font=\bfseries] at (10.0,4.0) {States \(\bar{\s}\)};

\node[qstate, orbit1] (qs1) at (10.0,2.8) {\(\bar{s}_1\)};
\node[qstate, orbit2] (qs2) at (10.0,1.0) {\(\bar{s}_2\)};

\draw[thick, <->] (qs1) -- (qs2);

\node at (10.0,-0.6) {\(|\bar{\s}|=2\)};

\node[font=\bfseries] at (13.4,4.0) {Actions \(\bar{\A}\)};

\node[act, orbit1] (qL) at (13.4,2.8) {\(\bar{L}\)};
\node[act, orbit2] (qU) at (13.4,1.4) {\(\bar{U}\)};

\node at (13.4,-0.6) {\(|\bar{\A}|=2\)};

\end{tikzpicture}
}
\caption{Example where homomorphism $h$ maps the original MDP $M$ to the quotient MDP $\bar{M}$.
}

\label{fig:quotient}
\end{figure}

\begin{definition}[MDP Homomorphism]
\label{def:mdp-homomorphism}
Let $\mdp$ be an MDP as in Definition~\ref{def:mdp}, and let
$\mdpb$ be an abstract MDP. An MDP homomorphism from $M$ to $\bar{M}$ is a tuple
$h = (\sigma,\{\alpha_s\}_{s\in\s})$, where
$\sigma:\s\to\bar{\s}$ is a surjective state map and, for each $s\in\s$,
$\alpha_s:\A(s)\to\bar{\A}(\sigma(s))$ is a surjective action map. These maps satisfy, for all
$s,s'\in\s$ and $a\in\A(s)$,
\begin{align*}
\bar{r}\bigl(\sigma(s),\alpha_s(a)\bigr)
&= r(s,a), \qquad 
\bar{p}\bigl(\sigma(s') \mid \sigma(s),\alpha_s(a)\bigr)
= \sum_{s''\in\sigma^{-1}(\sigma(s'))} p(s''\mid s,a).
\end{align*}
\end{definition}
An exact MDP homomorphism induces a model-equivalent abstraction~\cite{li2006towards}: state-action pairs that are mapped to the same abstract state-action pair have identical reward and aggregated transition behavior. Given a homomorphism $h=(\sigma,\{\alpha_s\}_{s\in\mathcal{S}})$, two state-action pairs $(s,a)$ and $(s',a')$ are $h$-equivalent if
$\sigma(s)=\sigma(s')$ and $\alpha_s(a)=\alpha_{s'}(a').$
In this work, we use the group-symmetry setting introduced in Definition~\ref{def:graph_mdp_group_symmetry} as the relevant special case: the equivalence classes are the group orbits, and the resulting abstract model is the quotient MDP characterized in Proposition~\ref{prop:quotient-size}.  




\section{Equivariant Sample Complexity}
\label{sec:sample}

We consider the episodic fixed-horizon graph-based quotient MDP $\bar{M}$. The agent interacts with the MDP over episodes of length $H$, and we study the number of episodes in which the executed policy is $\epsilon$-suboptimal with probability at least $1-\delta$. We quantify the reduction in sample complexity bounds relative to the original MDP $M$. For details on the episodic PAC framework, we refer to Appendix~\ref{app:pac}.

\subsection{Equivariant Sample Complexity Bounds}

We are now ready to incorporate the group symmetry into the upper and lower bounds of the sample-complexity. 
We note that $\tilde{\mathcal{O}}(\cdot)$ suppresses logarithmic factors in $H$, $|\s|$, $|\A|$, $|G|$, $1/\epsilon$, and $1/\delta$. This corresponds to a Uniform-PAC guarantee, i.e., the bound is polynomial in all of its arguments \cite{dann2017unifying}.
Our strategy is to apply the standard PAC bound to the quotient MDP. 
By the homomorphism property, an optimal policy in $\bar{M}$ lifts to an optimal policy in $M$. \\

\begin{lemma}
\label{lem:homomorphism-sample-complexity}
Let $\mdp$ be an MDP and let $h$ be the mapping that induces an MDP homomorphism from $M$ to $\bar M$ as in Definition~\ref{def:mdp-homomorphism}, where $\mdpb$. Applying the result of \cite[Thm. 1]{dann2015sample} to $\bar M$  yields the following sample-complexity upper bound
$$ \tilde{\mathcal{O}}\!\left(\frac{H^2 \bar C|\bar{\s}\times\bar{\A}|}{\epsilon^2}\ln\frac{1}{\delta}\right).$$
\end{lemma}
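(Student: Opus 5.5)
The plan is to treat the quotient $\bar M$ as a standard episodic finite-horizon MDP and apply \cite[Thm.~1]{dann2015sample} directly, then argue that the number of $\epsilon$-suboptimal episodes in $M$ is controlled by the number in $\bar M$.

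\textbf{Step 1: $\bar M$ satisfies the hypotheses of the cited theorem.} Proposition~\ref{prop:quotient-size} shows that $\bar\s$ and $\bar\A$ are finite and that $\bar M$ has at most $\bar C$ successors per pair. The definition of $\bar p$ in Definition~\ref{def:mdp-homomorphism}, as aggregated mass over preimages $\sigma^{-1}(\sigma(s'))$, gives a valid probability kernel: summing over abstract successors recovers $\sum_{s''}p(s''\mid s,a)=1$. The kernel $\bar p$ is also well defined. In the group setting of Definition~\ref{def:graph_mdp_group_symmetry}, $p(g\cdot c\mid g\cdot s,g\cdot a)=p(c\mid s,a)$. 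Hence the aggregated probability does not depend on which orbit representative $(s,a)$ is chosen, because $g$ permutes each successor orbit. Likewise, $\bar r$ is well defined because rewards are invariant, and $\bar r$ is known since $r$ is known. The horizon $H$ is unchanged.

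\textbf{Step 2: Invoke the theorem on $\bar M$.} Run the UCFH-type algorithm of \cite{dann2015sample} on $\bar M$. The agent observes $s_t$, computes $\sigma(s_t)$, acts via a lift $\alpha_{s_t}^{-1}(\bar a)$, and records the abstract transition $\sigma(s_{t+1})$. The abstract trajectories generated this way are distributed exactly as trajectories of $\bar M$ under the abstract policy. This follows by induction over $t$ using the homomorphism transition identity, together with $\sigma_\#p_0=\bar p_0$ for the initial distribution. Thm.~1 of \cite{dann2015sample} then bounds the number of episodes in which the abstract policy is $\epsilon$-suboptimal in $\bar M$ by $\tilde{\mathcal O}\big(H^2\bar C|\bar\s\times\bar\A|\epsilon^{-2}\ln(1/\delta)\big)$ with probability at least $1-\delta$.

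\textbf{Step 3: Transfer suboptimality back to $M$.} Show by backward induction on $t$ that the lifted policy satisfies $V^{\pi^\uparrow}_{t}(s)=\bar V^{\bar\pi}_t(\sigma(s))$. Also show that optimal values match, $V^\star_t(s)=\bar V^\star_t(\sigma(s))$; this is the optimal-policy lifting result of \cite{van2020mdp}. Both identities use the reward and aggregated-transition identities and the surjectivity of $\alpha_s$. Taking expectations over $p_0$ gives $R_M^{\pi^\uparrow}=R_{\bar M}^{\bar\pi}$ and $R_M^{\pi^\star}=R_{\bar M}^{\bar\pi^\star}$. So an episode is $\epsilon$-suboptimal in $M$ exactly when it is $\epsilon$-suboptimal in $\bar M$, and the bound carries over.

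The main obstacle is Step 2's simulation claim. It requires that the algorithm's confidence sets, built from the counts $\bar n(\bar s,\bar a)$, are valid for $\bar p$ even though the underlying raw pairs $(s,a)$ within a class differ. I would handle this by showing that the conditional law of $\sigma(s_{t+1})$ given the history depends only on $(\sigma(s_t),\alpha_{s_t}(a_t))$. That makes the abstract process a genuine Markov chain on $\bar\s$, so the martingale concentration arguments in \cite{dann2015sample} apply unchanged.
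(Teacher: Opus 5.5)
Your proposal is correct and rests on the same idea as the paper: apply \cite[Thm.~1]{dann2015sample} to $\bar M$ with $\s,\A,C$ replaced by $\bar\s,\bar\A,\bar C$. The paper's proof stops at that substitution, treating $\bar M$ as a standalone finite MDP and leaving the transfer to $M$ to the informal remark before the lemma that optimal policies lift.

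Your Steps~2 and~3 supply what that one-line argument leaves implicit. First, a learner interacting with $M$ through $\sigma$ and lifted actions generates genuine $\bar M$ trajectories. You correctly identify the crux: the conditional law of $\sigma(s_{t+1})$ given the full history depends only on $(\sigma(s_t),\alpha_{s_t}(a_t))$. That is what keeps confidence sets built from abstract counts valid. Second, $V^{\pi^\uparrow}_t=\bar V^{\bar\pi}_t\circ\sigma$ holds for every lifted policy, not just optimal ones. So the number of $\epsilon$-suboptimal episodes is the same in $M$ and $\bar M$.

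What your route buys is a guarantee about episodes executed in $M$, which is what Theorem~\ref{th:upper} actually uses. The paper's proof only bounds learning in $\bar M$ itself.

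One detail to make explicit: Definition~\ref{def:mdp-homomorphism} places no constraint on $\bar p_0$. You therefore need to impose, or define, $\bar p_0=\sigma_\# p_0$. This is harmless, since it leaves $\bar\s$, $\bar\A$ and $\bar C$ unchanged.

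Also, the group-setting well-definedness check in Step~1 is not needed for a general homomorphism. There, consistency of $\bar p$ and $\bar r$ is already part of the hypothesis.
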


\begin{proof}
By Definition~\ref{def:mdp-homomorphism}, the mapping $h$ induces the quotient MDP $\bar{M}$, whose state and action spaces are $\bar{\s}$ and $\bar{\A}$ and whose successor-state parameter is $\bar C$. Substituting these quantities into the bound of \cite{dann2015sample}, namely replacing $\s,\A,C$ by $\bar{\s},\bar{\A},\bar C$, directly yields the result.
\end{proof}

Lemma~\ref{lem:homomorphism-sample-complexity} applies the standard PAC sample-complexity bound directly to the quotient MDP. Theorem~\ref{th:upper} uses this substitution to quantify the resulting reduction relative to the original MDP. \\

\begin{theorem}[Upper Bound]
\label{th:upper}
Let $\mdp$ be an episodic fixed-horizon MDP and let $\epsilon \in (0,1)$ and $\delta \in (0,1)$. Then, with probability at least $1-\delta$, the number of episodes satisfying $
R^{*} - R^{\pi^k}
=
V_{1:H}^*(s_0) - V_{1:H}^{\pi^k}(s_0)
> \epsilon
$
is reduced by a factor of $|\mathcal{G}_{\s}|\,|\mathcal{G}_{\A}|\,|\mathcal{G}_C|$ relative to the corresponding bound for the original MDP, yielding the upper bound
$$
\tilde{\mathcal{O}}\!\left(
\frac{
H^2 C |\s \times \A|
}{
|\mathcal{G}_{\s}|\,|\mathcal{G}_{\A}|\,|\mathcal{G}_C|\,\epsilon^2
}
\ln \frac{1}{\delta}
\right).
$$
\end{theorem}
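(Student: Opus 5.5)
The plan is to combine Lemma~\ref{lem:homomorphism-sample-complexity} with the quotient sizes of Proposition~\ref{prop:quotient-size}, and then to argue that the PAC guarantee for $\bar M$ carries over to $M$. There are three steps.

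\textbf{Step 1: the bound on $\bar M$.} First I would invoke Lemma~\ref{lem:homomorphism-sample-complexity}. This gives that the number of episodes in which the algorithm run on $\bar M$ is $\epsilon$-suboptimal is
\[
\tilde{\mathcal{O}}\!\left(\frac{H^2\bar C|\bar\s\times\bar\A|}{\epsilon^2}\ln\frac1\delta\right),
\]
with probability at least $1-\delta$.

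\textbf{Step 2: substitute the quotient sizes.} Next I would use $|\bar\s\times\bar\A| = |\bar\s|\,|\bar\A|$ together with the quotient sizes from Proposition~\ref{prop:quotient-size}:
\[
|\bar\s| = \frac{|\s|}{|\mathcal{G}_\s|},\qquad |\bar\A| = \frac{|\A|}{|\mathcal{G}_\A|},\qquad \bar C = \frac{C}{|\mathcal{G}_C|}.
\]
Multiplying these gives
\[
\frac{H^2 C|\s\times\A|}{|\mathcal{G}_\s|\,|\mathcal{G}_\A|\,|\mathcal{G}_C|\,\epsilon^2}\ln\frac1\delta .
\]
The logarithmic factors hidden in $\tilde{\mathcal{O}}$ depend on $|\bar\s|,|\bar\A|\le|\s|,|\A|$, so they are bounded by the corresponding logs for $M$ and remain suppressed. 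Dividing the bound of \cite[Thm.~1]{dann2015sample} for $M$, namely $\tilde{\mathcal{O}}(H^2C|\s\times\A|\epsilon^{-2}\ln\frac1\delta)$, by this expression shows the reduction factor is exactly $|\mathcal{G}_\s|\,|\mathcal{G}_\A|\,|\mathcal{G}_C|$.

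\textbf{Step 3: transfer the guarantee to $M$.} Here I must show that counting $\epsilon$-suboptimal episodes in $\bar M$ also counts them in $M$. For a policy $\bar\pi$ on $\bar M$, define its lift by choosing $\pi_t(s)$ as any $a \in \alpha_s^{-1}(\bar\pi_t(\sigma(s)))$. The argument is a backward induction on $t = H,\dots,1$. The reward condition of Definition~\ref{def:mdp-homomorphism} matches the immediate rewards, and the aggregated-transition condition matches the expected next-step values, provided the value function at $t+1$ is constant on the fibres $\sigma^{-1}(\bar s)$. This yields $V^{\pi}_{t:H}(s)=\bar V^{\bar\pi}_{t:H}(\sigma(s))$. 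The same induction, using surjectivity of $\alpha_s$, gives $V^*_{t:H}(s)=\bar V^*_{t:H}(\sigma(s))$. The initial distribution pushes forward under $\sigma$, so $R^*_M - R^{\pi^k}_M = R^*_{\bar M} - R^{\bar\pi^k}_{\bar M}$. Therefore the episodes counted in the statement coincide with those bounded in Step~1.

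\textbf{Main obstacle.} I expect the difficulty to lie in Step~2 rather than Step~3, because Step~3 is the standard optimal value equivalence of \cite{van2020mdp,ravindran2003smdp}. The orbit-size equalities of Proposition~\ref{prop:quotient-size} hold exactly only when every orbit has the same size, i.e.\ the action is free or has uniform stabilizers. In general the correct statement is $|\bar\s| = |\s|/|\mathcal{G}_\s|$ with $|\mathcal{G}_\s|$ read as an average orbit size. I would handle this by interpreting $|\mathcal{G}_\s|$, $|\mathcal{G}_\A|$ and $|\mathcal{G}_C|$ as the effective (average, or worst-case minimum) orbit sizes that make Proposition~\ref{prop:quotient-size} hold, and stating the theorem under that convention. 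Taking the minimum orbit size yields a valid upper bound, since the number of orbits is at most $|\s|/\min|\mathcal{G}|$.
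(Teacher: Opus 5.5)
\textbf{Verdict.} Your Steps~1 and~2 are exactly the paper's proof: Lemma~\ref{lem:homomorphism-sample-complexity}, then substitution of the sizes from Proposition~\ref{prop:quotient-size}. Your Step~3 correctly supplies the lifting and value-equivalence argument that the paper only asserts before Lemma~\ref{lem:homomorphism-sample-complexity}. You are right that Step~2 is the weak point. However, the serious defect is not non-uniform orbit sizes, so passing to minimum orbit sizes does not repair it. The defect is inherited from Proposition~\ref{prop:quotient-size} and is equally present in the paper's proof: the state and action reductions do not multiply when $\mathcal{G}_{\A}$ is the $G$-orbit of an action.

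\textbf{Why the product fails.} In the quotient that is genuinely a homomorphic image of a $G$-invariant MDP, the abstract state--action pairs are the diagonal $G$-orbits of admissible pairs. Equivalently, the abstract actions at $[s]$ are the orbits of the stabilizer $G_s$ on $\A(s)$. The orbit of $(s,a)$ has size $|G|/|G_s\cap G_a|$. Because $|G_sG_a|\le|G|$, this is at most $\min\bigl(|G|,\,|\mathcal{G}_s|\,|\mathcal{G}_a|\bigr)$, and it equals the product only if $G_sG_a=G$.

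For instance, let $\mathbb{Z}_n$ act by translation on $\s=\mathbb{Z}_n$ and on $\A=\mathbb{Z}_n$. Your substitution gives $|\bar{\s}\times\bar{\A}|=n^2/n^2=1$. But $G_s$ is trivial, so the single abstract state keeps all $n$ actions; these are distinguished, e.g., by an invariant reward $r(s,a)=f(a-s)$ with $f$ injective. The state--action factor entering Lemma~\ref{lem:homomorphism-sample-complexity} is therefore $n$, not $1$.

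You also cannot force $|\bar{\A}|=|\A|/|\mathcal{G}_{\A}|$ by taking $\alpha_s(a)=[a]\in\A/G$. Invariance gives $r(gs,ga)=r(s,a)$ only with the same $g$ in both slots, so this map is in general not an MDP homomorphism, and your Step~3 then fails for it.

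The successor term has the same problem. $\mathrm{Succ}(s,a)$ is only $(G_s\cap G_a)$-invariant. If $G$ acts freely on states and the successors lie in distinct orbits, then $\bar C=C$, even though every $G$-orbit has size $|G|$.

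\textbf{How to repair it.} The argument goes through if you read $\mathcal{G}_{\A}$ as the orbit of $a$ under $G_s$, and $\mathcal{G}_C$ as the orbits of successors under $G_s\cap G_a$. Then:
\begin{itemize}
\item $|\mathcal{G}_s|\,|G_s\cdot a|=|G|/|G_s\cap G_a|$ is exactly the orbit--stabilizer count of the pair, so the number of abstract pairs is at most $|\s\times\A|$ divided by the minimum of this count.
\item The abstract successors of $([s],[a])$ are at most the number of $(G_s\cap G_a)$-orbits in $\mathrm{Succ}(s,a)$.
\item Your minimum-orbit convention, combined with Steps~1--3, then gives a valid proof.
\end{itemize}
The certified factor never exceeds $|G|$, rather than reaching up to $|G|^3$. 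This is because the product of the three minima is bounded by the product at any single triple $(s,a,c)$, which equals $|G|/|G_s\cap G_a\cap G_c|$. For a free action on states the factor is exactly $|G|$, with no action or successor reduction, which is precisely Corollary~\ref{cor:ring}.
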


\begin{proof}
By Lemma~\ref{lem:homomorphism-sample-complexity}, the sample complexity is bounded in terms of $\bar{\s}$, $\bar{\A}$, and $\bar C$. Per Proposition~\ref{prop:quotient-size}, the equivalence classes on the quotient spaces satisfy $|\bar{\s}|=|\s|/|\mathcal{G}_{\s}|$ and $|\bar{\A}|=|\A|/|\mathcal{G}_{\A}|$. Similarly, the successor-state complexity satisfies $\bar C=C/|\mathcal{G}_C|$. Substituting these quantities yields the complexity reduction  $|\mathcal{G}_{\s}|\,|\mathcal{G}_{\A}|\,|\mathcal{G}_C|$.
\end{proof}


The utility of Theorem~\ref{th:upper} is best illustrated by the structural compression observed in the quotient MDP (Fig.~\ref{fig:quotient}). 
While the original problem is governed by a complexity factor of $C|S \times A| = 64$, the symmetry-aware mapping reduces this to $\bar{C}|\bar{S} \times \bar{A}| = 8$. This $2^3$ reduction is not merely a numerical artifact of this example, but a direct manifestation of the abstract bounds established in Theorem~\ref{th:upper}. 
The following corollary makes this reduction explicit for a cyclic ring (Fig.~\ref{fig:headline}). \\

\begin{corollary}[Ring Sample Complexity]
\label{cor:ring}
For a ring domain with $|G|$ cyclic rotations, the symmetry group $G$ acts freely on states, while actions and the local successor branching factor are preserved. Hence, Theorem~\ref{th:upper} yields
$
\tilde{\mathcal{O}}\!\left(
\frac{
H^2 C |\s \times \A|
}{
|G|\epsilon^2
}
\ln \frac{1}{\delta}
\right).
$
\end{corollary}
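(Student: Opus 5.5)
The corollary follows by specializing Theorem~\ref{th:upper}: we identify the three orbit factors $|\mathcal{G}_{\s}|$, $|\mathcal{G}_{\A}|$, $|\mathcal{G}_C|$ for the ring domain and substitute them. The setting is a cycle graph $\G$ on $n$ nodes with $G\cong\mathbb{Z}_n$ acting by rotation, so $|G|=n$. Rotations preserve the cycle's adjacency, and in the ring domain rewards and transitions depend only on the relative configuration. Hence the hypotheses of Definition~\ref{def:graph_mdp_group_symmetry} hold, and the orbit quotient gives an exact MDP homomorphism. Lemma~\ref{lem:homomorphism-sample-complexity} and Theorem~\ref{th:upper} therefore apply.

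\textbf{State factor.} We show the action on $\s=X^V\times V$ is free, i.e.\ every stabilizer $G_s$ is trivial. If $g\cdot s=s$ for $s=(x,q)$, then in particular $g\cdot q=q$. A nontrivial rotation of a cycle fixes no vertex, so $g=e$. This argument uses only the agent-position coordinate, so freeness holds regardless of $x$. By the orbit--stabilizer theorem (Definition~\ref{def:orbit-stabilizer}), every orbit has size $|G|$, and Proposition~\ref{prop:quotient-size} gives $|\bar{\s}|=|\s|/|G|$, i.e.\ $|\mathcal{G}_{\s}|=|G|$.

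\textbf{Action and successor factors.} An action is a move to a neighbor of $q$, which on a ring is described locally as ``clockwise'' or ``counterclockwise'' relative to the agent. Rotations map the clockwise move at $q$ to the clockwise move at $g\cdot q$, so in the local labelling the action set is not collapsed: $|\mathcal{G}_{\A}|=1$ and $|\bar{\A}|=|\A|$. Similarly, the successor set of $(g\cdot s,g\cdot a)$ is the image under $g$ of $\mathrm{Succ}(s,a)$. Within a single successor set no two elements are identified by the quotient, because the successors are distinct local outcomes of one move. Thus the branching factor is preserved, $\bar C=C$, and $|\mathcal{G}_C|=1$. Substituting $|\mathcal{G}_{\s}|=|G|$ and $|\mathcal{G}_{\A}|=|\mathcal{G}_C|=1$ into Theorem~\ref{th:upper} yields the stated bound.

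\textbf{Main obstacle.} The delicate step is the convention for actions and successors. If actions are taken as absolute target vertices in $V$, then $G$ also acts on $\A$ and one might naively claim a further $|G|$ reduction. The correct statement is that the joint pair count is what shrinks: $|\bar{\s}\times\bar{\A}|=|\s\times\A|/|G|$. I would handle this by counting orbits of admissible pairs $(s,a)$ directly. Because $G$ acts freely on the first coordinate, it acts freely on pairs, so there are exactly $|\s\times\A|/|G|$ pair-orbits. For successors, the restricted map $\mathrm{Succ}(s,a)\to\bar{\s}$ is injective: two distinct successors $c,c'$ in the same orbit with $c'=g\cdot c$ and $g\ne e$ would have to be reachable in one step from the same state, which the local dynamics rule out. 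This gives $\bar C=C$, and the total reduction factor is exactly $|G|$.
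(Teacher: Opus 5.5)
Your proposal is correct and follows the paper's route: specialize Theorem~\ref{th:upper} with $|\mathcal{G}_{\s}|=|G|$ and $|\mathcal{G}_{\A}|=|\mathcal{G}_C|=1$. The paper's proof is only this substitution, and it takes freeness on states and preservation of actions and branching as hypotheses of the statement.

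Two of your additions are worthwhile. First, you derive freeness from the agent coordinate alone. Second, you count orbits of admissible pairs: the action is free on pairs, hence $|\bar{\s}\times\bar{\A}|=|\s\times\A|/|G|$. This matters because Definition~\ref{def:graph_action_space} uses absolute target vertices as actions. Under that convention, a literal reading of Proposition~\ref{prop:quotient-size} gives $|\mathcal{G}_{\A}|=|G|$, and the product factor in Theorem~\ref{th:upper} would overclaim a reduction of $|G|^2$. Your pair count avoids this.

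One step is overstated: locality does not guarantee that $\mathrm{Succ}(s,a)\to\bar{\s}$ is injective. Consider slip dynamics, where the agent reaches $q+1$ or $q-1$ with positive probability and features are unchanged. If the labelling $x$ is fixed by the two-step rotation $g$, the successors $(x,q-1)$ and $(x,q+1)=g\cdot(x,q-1)$ collapse to one quotient state. For $|X|=1$ on a ring with at least three nodes, this gives $\bar C=1<2=C$.

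This does not hurt the corollary, because the upper bound only needs $\bar C\le C$, and that is immediate. By the aggregation formula in Definition~\ref{def:mdp-homomorphism}, the quotient successor set of $(\sigma(s),\alpha_s(a))$ is the image $\sigma(\mathrm{Succ}(s,a))$, so it has at most $|\mathrm{Succ}(s,a)|$ elements. Either use that inequality or, as the paper does, take preserved branching as given.
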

\begin{proof}
Quotienting reduces the number of states by $|G|$, while the action space $\A$ and the successor-state complexity $C$ is preserved.
\end{proof}

We next establish a matching lower bound. Here, $\tilde{\Omega}(\cdot)$ hides logarithmic factors, analogously to $\tilde{\mathcal{O}}(\cdot)$.\\

\begin{theorem}[Lower Bound]
\label{th:lower}
Let $\mdp$ be an episodic fixed-horizon MDP and let $\epsilon \in (0,1)$ and $\delta \in (0,1)$. Then, with probability at least $1-\delta$, the number of episodes satisfying $
R^{*} - R^{\pi^k}
=
V_{1:H}^*(s_0) - V_{1:H}^{\pi^k}(s_0)
> \epsilon
$
has a lower bound reduced by a factor of $|\mathcal{G}_{\s}|\,|\mathcal{G}_{\A}|$ relative to the corresponding bound for the original MDP, yielding the lower bound
$$
\tilde{\Omega}\!\left(
\frac{H^2 |\s \times \A|}{{|\mathcal{G}_{\s}||\mathcal{G}_\A|} \epsilon^2}
\ln \frac{1}{\tilde{\delta}}
\right),
$$
where $\tilde{\delta}=\delta+c$ for an absolute constant $c>0$.
\end{theorem}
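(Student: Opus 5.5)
The proof mirrors the one for Theorem~\ref{th:upper}, replacing the PAC upper bound with the matching lower bound of \cite[Thm.~2]{dann2015sample}. That result states that for every $\epsilon\in(0,1)$, $\delta\in(0,1)$, and every algorithm, there is an episodic fixed-horizon MDP with $|\s|$ states and $|\A|$ actions on which, with probability at least $\delta$, the algorithm executes an $\epsilon$-suboptimal policy in
$$
\tilde{\Omega}\!\left(\frac{H^2|\s\times\A|}{\epsilon^2}\ln\frac{1}{\delta+c}\right)
$$
episodes, for an absolute constant $c>0$. This is the origin of the shifted confidence $\tilde\delta=\delta+c$ in the statement.

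\textbf{Step 1: reduce to the quotient.} Take $h=(\sigma,\{\alpha_s\})$ to be the homomorphism induced by the symmetry group $G$ of Definition~\ref{def:graph_mdp_group_symmetry}. An equivariant learner, meaning one whose estimates, visit counts, and policy are shared across orbits, interacts with $M$ only through $h(s,a)=(\bar s,\bar a)$. By Definition~\ref{def:mdp-homomorphism}, the induced process on $\bar\s$ is exactly $\bar M$: rewards match, aggregated transitions match, and $V^\pi$ computed in $\bar M$ equals the value of the lifted policy in $M$. The suboptimality events $R^*-R^{\pi^k}>\epsilon$ in $M$ therefore coincide with the corresponding events in $\bar M$. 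Running the learner on $M$ is thus equivalent to running an arbitrary algorithm on $\bar M$.

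\textbf{Step 2: embed the hard instance.} Instantiate the lower-bound construction of \cite{dann2015sample} on $\bar M$, with $|\bar\s|$ states and $|\bar\A|$ actions. Then lift it to $M$ by copying it along each orbit, setting $r(g\cdot s,g\cdot a)=\bar r(\sigma(s),\alpha_s(a))$ and transporting the transitions by $g$. This lifted $M$ is $G$-invariant, so it belongs to the class under consideration. Any algorithm on $M$ that uses the symmetry therefore incurs the $\bar M$ lower bound
$$
\tilde{\Omega}\!\left(\frac{H^2|\bar\s\times\bar\A|}{\epsilon^2}\ln\frac{1}{\tilde\delta}\right).
$$
Unlike Theorem~\ref{th:upper}, the branching parameter $C$ does not appear, because the hard instance of \cite{dann2015sample} has bounded branching. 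This explains why $|\mathcal{G}_C|$ is absent from the reduction factor.

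\textbf{Step 3: substitute the orbit sizes.} By Proposition~\ref{prop:quotient-size}, $|\bar\s|=|\s|/|\mathcal{G}_\s|$ and $|\bar\A|=|\A|/|\mathcal{G}_\A|$. Substituting these gives the stated bound, which is reduced by the factor $|\mathcal{G}_\s||\mathcal{G}_\A|$.

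\textbf{Main obstacle.} The delicate step is Step~2: ensuring that the hard instance lifts to a valid $G$-invariant MDP with the prescribed orbit sizes. The lift requires choosing orbit representatives so that $\alpha_s$ is surjective onto $\bar\A(\sigma(s))$ and the transitions transported by $g$ are well defined. If the action of $G$ has nontrivial stabilizers, transporting transitions may be inconsistent. I would first handle the free-action case, as in Corollary~\ref{cor:ring}, where each orbit is a regular copy. I would then argue that stabilizers only merge identical copies, so the construction remains well defined with orbit sizes as in Definition~\ref{def:orbit-stabilizer}.
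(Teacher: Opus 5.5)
Your proposal follows the paper's route: the paper's proof consists of your Steps 1 and 3, namely passing to the quotient $\bar{M}$ via the homomorphism and substituting $|\bar{\s}|=|\s|/|\mathcal{G}_{\s}|$ and $|\bar{\A}|=|\A|/|\mathcal{G}_{\A}|$ into the lower bound of \cite{dann2015sample}. Your Step 2, lifting the hard instance to a $G$-invariant MDP, is not in the paper, and it is the extra justification a worst-case lower bound actually needs; one caution there is that when $G$ acts freely on states the $G$-induced maps $\alpha_s$ are bijections, so the $|\mathcal{G}_{\A}|$-fold action redundancy cannot come from $G$ (contrary to your Step 1 phrasing) and must be built into the lift by hand, which is harmless for a lower bound.
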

\begin{proof}
By the MDP homomorphism, the reduced MDP has state space $\bar{\s}$ and action space $\bar{\A}$. The equivalence classes on the quotient spaces satisfy $|\bar{\s}|=|\s|/|\mathcal{G}_{\s}|$ and $|\bar{\A}|=|\A|/|\mathcal{G}_{\A}|$. Substituting yields the complexity reduction $|\mathcal{G}_{\s}||\mathcal{G}_\A|$.
\end{proof}

The theoretical gap illustrated in Fig.~\ref{fig:reduction} underscores the improvement in efficiency afforded by exploiting problem symmetries. For this graph-based MDP, where \(\mathcal{S}\) denotes the original state space before quotienting and \(|\mathcal{S}|=10^5\), the symmetry-aware bounds reflect a substantial (potentially five orders of magnitude) reduction in the number of episodes required to achieve convergence to 95\% optimality. This performance gap becomes more pronounced as the problem scale increases. \\

\begin{figure}[t!]
    \centering
    \includegraphics[width=0.6\textwidth]{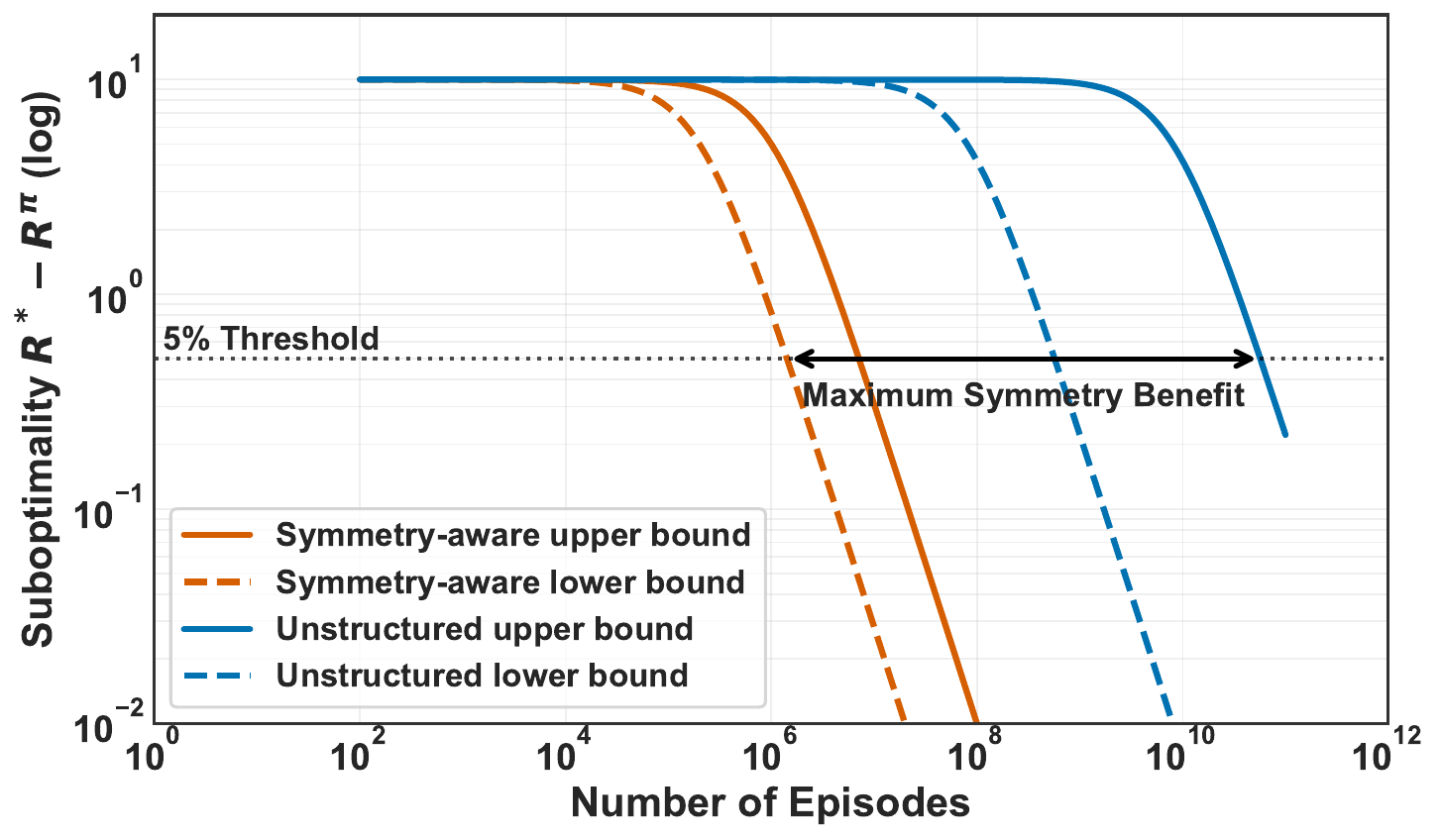}
    \caption{Theoretical comparison between unstructured and symmetry-aware sample complexity for a graph-based MDP with $|\mathcal{S}|=10^5$ states before quotienting by symmetry. For the ring graph, the symmetry group has full state-size symmetry, $|G|=10^5$, yielding a potential five-order-of-magnitude reduction in the number of episodes required to reach $95\%$ optimality with high probability.}
    \label{fig:reduction}
\end{figure}

\subsection{Continuous-Space Bounds}

We now extend the sample-complexity analysis to continuous domains by replacing finite cardinalities with covering numbers. In particular, under suitable regularity assumptions, the continuous MDP can be discretized by an equivariant $\alpha$-net that preserves the symmetry structure. The resulting quotient MDP then inherits the sample-complexity reduction of the discrete analysis, with the cardinalities of the quotient state and action spaces replaced by their corresponding covering
numbers. \\

\begin{theorem}[Continuous-Space Upper Bound]
\label{th:continuous-sample-complexity}
Let $(\s,\rho_{\s})$ and $(\A,\rho_{\A})$ be compact metric spaces for a finite-horizon MDP with $L_r$-Lipschitz rewards and $L_p$-Lipschitz transitions in the Wasserstein metric. Assume a group $G$ acts freely and isometrically on $\s$ and $\A$, and that the MDP is $G$-invariant. Then, for $\epsilon,\delta\in(0,1)$ and an equivariant discretization with resolution $\alpha=\alpha(\epsilon)\propto\epsilon/(HL_V)$, with probability at least $1-\delta$, the number of $\epsilon$-suboptimal episodes is upper bounded by
\[
\tilde{\mathcal{O}}\!\left(
\frac{
H^2 \bar C_{\alpha}
N(\s/G,\alpha)
N(\A/G,\alpha)
}{
\epsilon^2
}
\ln\frac{1}{\delta}
\right),
\]
where $\bar C_{\alpha}$ is the successor-state complexity of the discretized quotient MDP and $L_V$ is determined by the reward and transition Lipschitz constants.
\end{theorem}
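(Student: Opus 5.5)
The plan is to reduce the continuous problem to the discrete quotient setting and then invoke Lemma~\ref{lem:homomorphism-sample-complexity}. The argument has three parts: an equivariant discretization, a bound on the approximation error it introduces, and a count of the quotient cells.

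\textbf{Step 1: equivariant discretization.} Because $G$ acts freely and isometrically, the orbit spaces $\s/G$ and $\A/G$ carry the quotient pseudometrics
\[
\bar\rho_{\s}([s],[s'])=\min_{g\in G}\rho_{\s}(s,g\cdot s'),
\]
and analogously for $\A$. Compactness makes these spaces compact, so finite covering numbers exist.

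Take a minimal $\alpha$-net $\{\bar s_1,\dots,\bar s_m\}$ of $\s/G$, with $m=N(\s/G,\alpha)$. Choose a representative $s_i$ in each orbit and take the $G$-orbit of these points as the net on $\s$; this net is $G$-invariant by construction. Define the Voronoi-type cells equivariantly: choose them on a fundamental domain and translate by $G$, with ties broken consistently along orbits. Build the action net the same way. The discretized MDP $M_\alpha$ averages rewards over cells and pushes the transition kernel forward onto cells. Invariance of $r$ and $p$ under $G$, together with the fact that $G$ maps cells to cells, makes $M_\alpha$ $G$-invariant in the sense of Definition~\ref{def:graph_mdp_group_symmetry}. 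Its quotient $\bar M_\alpha$ therefore has at most $N(\s/G,\alpha)$ states, at most $N(\A/G,\alpha)$ actions, and successor parameter $\bar C_\alpha$.

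\textbf{Step 2: approximation error (the main obstacle).} I will show by backward induction on $t$ that the value function is Lipschitz:
\[
|V_t^\pi(s)-V_t^\pi(s')|\le L_t\,\rho_{\s}(s,s').
\]
Here $L_t = L_r + L_p L_{t+1}$, using Kantorovich--Rubinstein duality for the $W_1$-Lipschitz transitions. This gives $L_V:=\max_t L_t$, which is finite for finite $H$ and depends only on $L_r$ and $L_p$.

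A simulation-lemma argument then shows that for every cell-measurable policy, returns in $M$ and $M_\alpha$ differ by at most about $H L_V\alpha$. The delicate point is that pushing the kernel onto cells perturbs it by at most $\alpha$ in $W_1$, and this error is paid once per step.

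Two consequences follow:
\begin{itemize}
\item the optimal value of $M_\alpha$ is within $H L_V\alpha$ of $R^{\pi^\star}_M$, since the Lipschitz continuity of $Q^\star$ means a net point loses at most $L_V\alpha$ per step;
\item an $\epsilon/2$-optimal policy for $M_\alpha$, lifted to $M$, is $(\epsilon/2+2HL_V\alpha)$-optimal.
\end{itemize}
Choosing $\alpha=\epsilon/(4HL_V)$ keeps the total error at most $\epsilon$, which matches $\alpha\propto\epsilon/(HL_V)$. If the Lipschitz recursion is too loose and $L_V$ grows exponentially in $H$, I would still absorb this into the constant $L_V$, as the statement allows. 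The one technical gap is that the learner only observes continuous states and feeds their cell indices to the discrete algorithm. The cell process is not exactly Markov, so I would treat the residual mismatch as part of the $HL_V\alpha$ model error.

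\textbf{Step 3: counting and applying the lemma.} Run the algorithm of Lemma~\ref{lem:homomorphism-sample-complexity} at accuracy $\epsilon/2$ on $\bar M_\alpha$. This is valid because the orbit map is an MDP homomorphism from $M_\alpha$ to $\bar M_\alpha$. The lemma bounds the number of episodes that are $\epsilon/2$-suboptimal in $\bar M_\alpha$ by
\[
\tilde{\mathcal{O}}\bigl(H^2\bar C_\alpha|\bar\s\times\bar\A|\epsilon^{-2}\ln(1/\delta)\bigr).
\]
By Step 2, every episode that is $\epsilon$-suboptimal in $M$ is among these. Substituting $|\bar\s|\le N(\s/G,\alpha)$ and $|\bar\A|\le N(\A/G,\alpha)$, and absorbing constants and logarithms, yields the stated bound.
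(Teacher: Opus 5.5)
Your route is the paper's: an equivariant $\alpha$-net, a Lipschitz constant $L_V$ for the value function, $\alpha\propto\epsilon/(HL_V)$ so that discretization costs $\epsilon/2$, the discrete bound applied to the discretized quotient with covering numbers in place of cardinalities, and a lifting step. Citing Lemma~\ref{lem:homomorphism-sample-complexity} instead of Theorem~\ref{th:upper} makes no difference. The paper states both transfer steps without argument. You locate the weak point correctly, but your patch does not close it, so Step~3 does not follow as written.

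\textbf{The PAC guarantee does not transfer.} Lemma~\ref{lem:homomorphism-sample-complexity} counts episodes for a learner whose trajectories are generated by the finite MDP $\bar M_\alpha$. Its proof rests on the event that this finite kernel lies in every confidence set $\mathcal{M}_k$ (the lemma in Appendix~\ref{app:pac}), and on the optimism that event yields. Since $p$ is unknown, the learner cannot run on $\bar M_\alpha$; it runs on $M$ and sees only cell indices. The observed transitions out of a cell pair are then draws from $p(\cdot\mid s,a)$ at whichever points of the cell the current policy and history reach. They are neither identically distributed nor samples of the kernel you defined for $M_\alpha$. Hence the concentration step fails, there is no fixed finite model for $\mathcal{M}_k$ to contain, and the optimism argument breaks down. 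Absorbing this into ``$HL_V\alpha$ model error'' presupposes a misspecification-robust tabular PAC theorem, which Lemma~\ref{lem:homomorphism-sample-complexity} is not. You would need to inflate the confidence widths by an aggregation-bias term proportional to $\alpha$ and redo the optimism and counting argument directly for the interaction with $M$.

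\textbf{The Lipschitz claim in Step~2 is too strong.} The recursion $L_t=L_r+L_pL_{t+1}$ holds for $V^\star$, since it is a maximum of Lipschitz $Q^\star$-functions, and for Lipschitz policies. It does not hold for the cell-measurable policies you apply it to: these jump at cell boundaries, so $V^\pi$ need not be Lipschitz. A $W_1$-perturbation of size $\alpha$ gives no control over integrals of piecewise-constant value functions.

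In fact your second bullet is false as stated. The successor of a state and the successor of its cell representative are only $O(L_p\alpha)$-close in $W_1$, and can fall in different cells. So $M$ can reach cells that $M_\alpha$ never reaches from $p_0$. A policy that is $\epsilon/2$-optimal, or even optimal, for the return from $p_0$ in $M_\alpha$ may act arbitrarily on those cells. Its return in $M$ can then be off by an amount not controlled by $\alpha$.

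What the Lipschitz argument does give, using only the Lipschitzness of $V^\star_M$, is a uniform bound on $\|Q^\star_M-Q^\star_{M_\alpha}\|_\infty$ of order $H\alpha$ times the Lipschitz constants. Turning this into a guarantee for the learned policy requires that policy to be near-greedy for $Q^\star_{M_\alpha}$ on the states $M$ actually visits. Again, only an analysis of the learner's interaction with $M$ can supply that.
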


\begin{proof}
Let $\alpha>0$ and construct equivariant $\alpha$-nets of $\s$ and $\A$. Since the group action is isometric, the discretization preserves the $G$-symmetry and induces a quotient discretized MDP. Under the Lipschitz assumptions, the value function is Lipschitz with constant $L_V$. Choosing
\[
\alpha(\epsilon)=\frac{\epsilon}{2c_0HL_V},
\]
for an absolute constant $c_0$, bounds the discretization error by $\epsilon/2$. The resulting quotient MDP has state and action cardinalities bounded by $N(\s/G,\alpha(\epsilon))$ and $N(\A/G,\alpha(\epsilon))$, respectively. Applying Theorem~\ref{th:upper} to this quotient MDP gives
\[
\tilde{\mathcal{O}}\!\left(
\frac{
H^2 \bar C_{\alpha(\epsilon)}
N(\s/G,\alpha(\epsilon))
N(\A/G,\alpha(\epsilon))
}{
\epsilon^2
}
\ln\frac{1}{\delta}
\right).
\]
Finally, lifting the resulting policy to the original continuous MDP through the homomorphism introduces at most $\epsilon/2$ discretization error, yielding an $\epsilon$-optimal policy and the stated bound.
\end{proof}

In the next section, we empirically evaluate the theoretical results and their implications across discrete and continuous domains.

\section{Experiments}

\begin{figure}[t!]
    \centering
    \begin{tabular}{@{}c@{\hspace{0.02\textwidth}}c@{\hspace{0.02\textwidth}}c@{}}
        \begin{subfigure}[b]{0.31\textwidth}
            \centering
            \includegraphics[width=\linewidth]{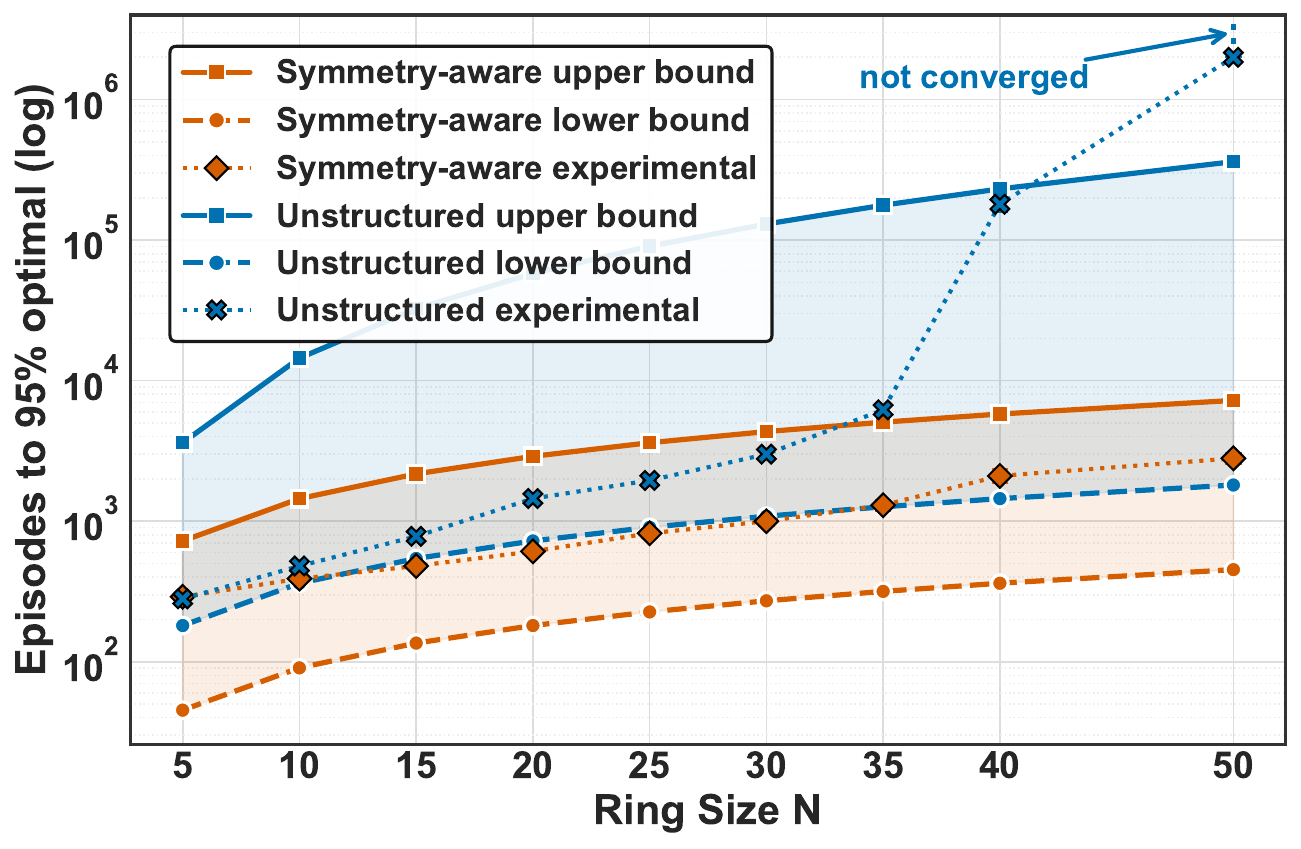}
            \caption{Cyclic ring experiment (1D).}
            \label{fig:sample_complexity-1D}
        \end{subfigure}
        &
        \begin{subfigure}[b]{0.33\textwidth}
            \centering
            \includegraphics[width=\linewidth]{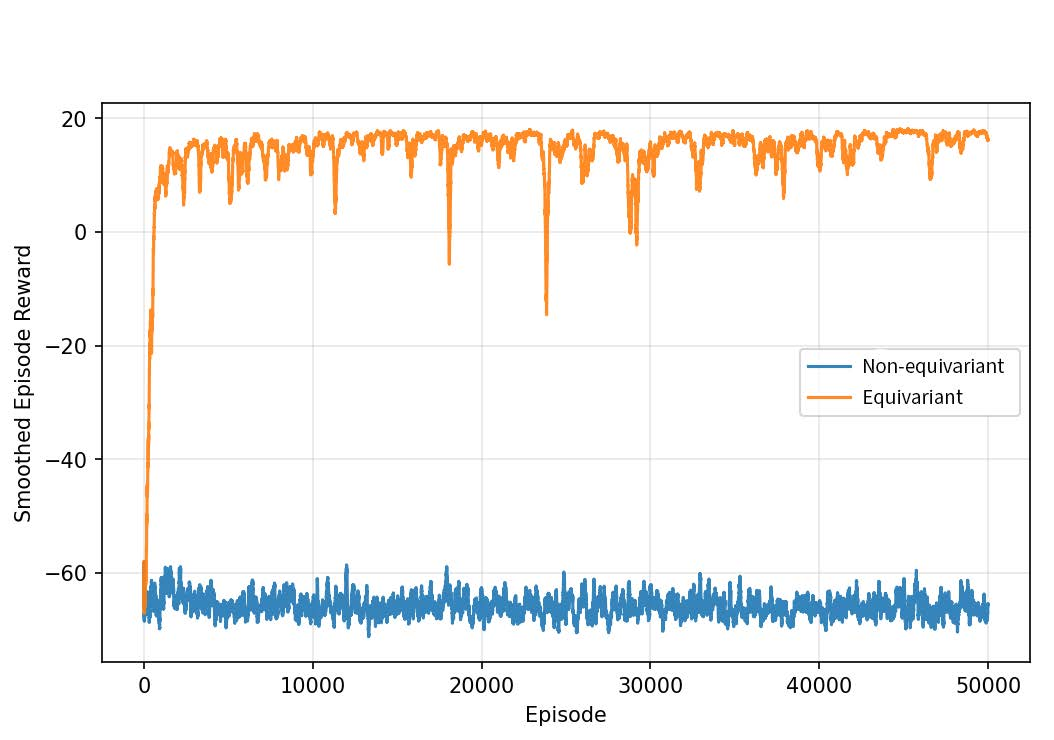}
            \caption{Cyclic ring experiment (3D).}
            \label{fig:sample_complexity-3D}
        \end{subfigure}
        &
        \hspace{-2mm}
        \begin{subfigure}[b]{0.37\textwidth}
            \centering
            \includegraphics[width=\linewidth]{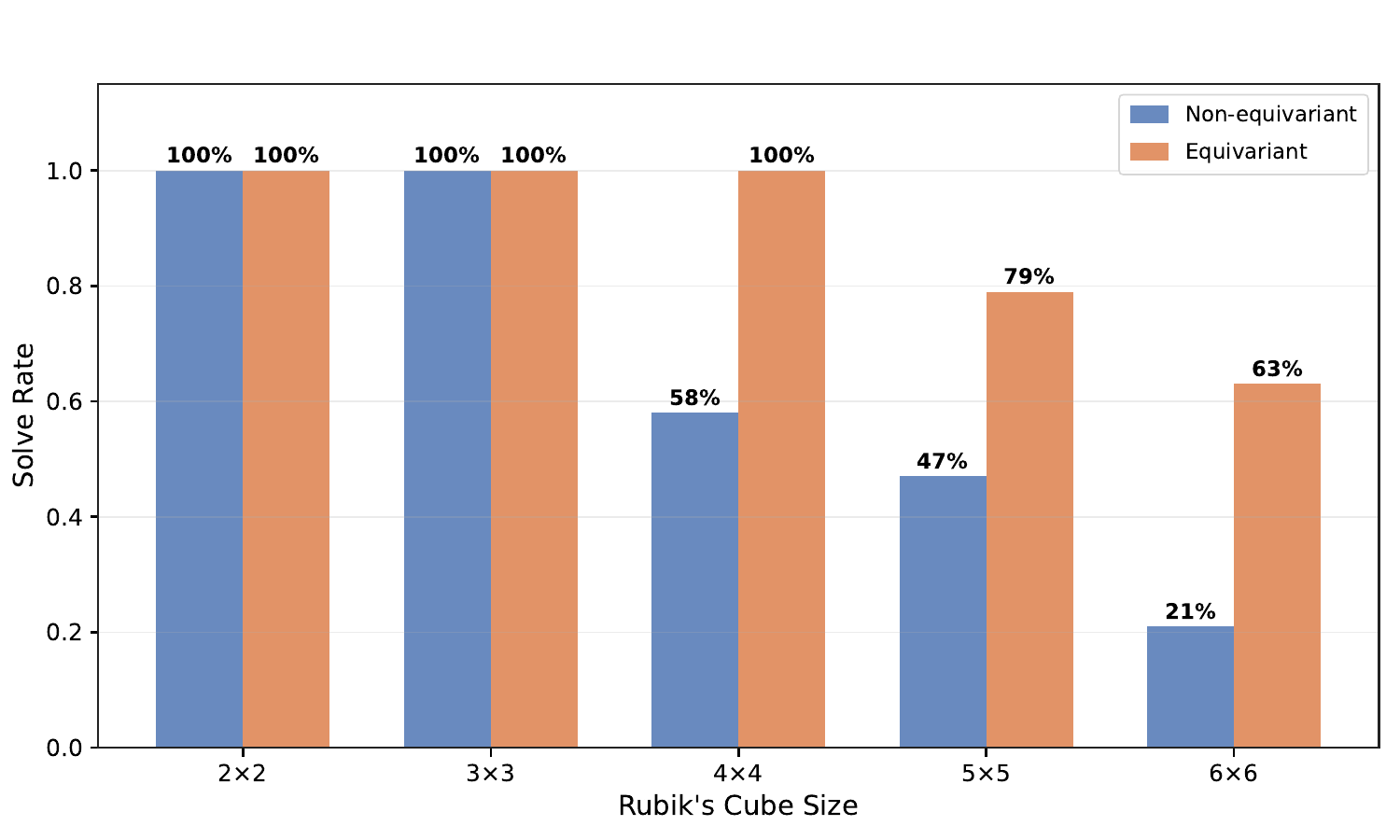}
            \caption{Rubik's cube experiment.}
            \label{fig:rubik}
        \end{subfigure}
    \end{tabular}

    \caption{Performance across the 1D cyclic ring, 3D cyclic ring, and Rubik's cube experiments. Symmetry-aware learning reduces the number of episodes needed for better performance.}
    \label{fig:experiments}
\end{figure}

We evaluate the practical implications of the sample complexity bounds on a set of representative environments with symmetry. Across all experiments, the baseline and symmetry-aware variants are compared under identical training settings, including network architecture, optimizer, and hyperparameter choices. We consider five tasks of increasing structural complexity: 1) a cyclic environment illustrating basic group symmetry, 2) a toroidal environment with three coupled rings illustrating product cyclic structure, 3) a structured combinatorial domain based on a Rubik's game, 4) a block-stacking robotics task and 5) a nut-assembly robotics task, both from~\cite{mandlekar2023mimicgen}. Additional setup and implementation details are provided in Appendix~\ref{app:setup}.

\subsection{Results}

\paragraph{Structured Domains.} We first consider ring-structured graphs with $N$ nodes. In the 1D ring, the symmetry group is the cyclic group $G=\mathbb{Z}/N\mathbb{Z}$ of rotations, so $|G|=N$. In the 3D ring, the graph is the product of three cyclic rings with symmetry group $G=(\mathbb{Z}/N\mathbb{Z})^3$, so $|G|=N^3$. Figs.~\ref{fig:sample_complexity-1D} and~\ref{fig:sample_complexity-3D} show the theoretical bounds and empirical learning curves, respectively. The equivariant method reaches higher reward within fewer episodes and exhibits substantially better sample efficiency than the non-equivariant baseline. This effect is more pronounced in 3D, where the non-equivariant baseline incurs high loss and fails to converge within the training budget. For the unstructured baseline at $N=50$ (Fig.~\ref{fig:sample_complexity-1D}), the empirical sample complexity exceeds the plotted theoretical range and the method fails to reach the required performance threshold within the training budget. A similar trend appears in the Rubik's cube experiments in Fig.~\ref{fig:rubik}. Canonicalization maps symmetry-equivalent cube states to a common representative, reducing the number of configurations the learner must distinguish. The symmetry-aware method maintains higher solve rates as the problem size increases, whereas the non-equivariant baseline degrades more rapidly, demonstrating improved information sharing across equivalent configurations.

\begin{table*}[h!]
\centering
\begin{minipage}[t]{0.48\textwidth}
\centering
\caption{Convergence for the 1D cyclic ring over 10 independent random seeds (mean $\pm$ SE).}
\label{tab:ring-multiseed}
\resizebox{0.90\linewidth}{!}{%
\begin{tabular}{c|cc}
\toprule
$N$ & Non-equiv. & Equiv. \\
\midrule
5  & $\mathbf{280.5 \pm 20.9}$ & $290.1 \pm 13.0$ \\
10 & $484.5 \pm 22.9$ & $\mathbf{405.1 \pm 18.0}$ \\
15 & $788.6 \pm 33.0$ & $\mathbf{493.1 \pm 15.6}$ \\
20 & $1{,}444.1 \pm 137.0$ & $\mathbf{617.1 \pm 28.8}$ \\
25 & $1{,}966.1 \pm 129.4$ & $\mathbf{829.1 \pm 63.9}$ \\
30 & $3{,}016.3 \pm 158.0$ & $\mathbf{997.0 \pm 71.0}$ \\
35 & $6{,}206.6 \pm 635.4$ & $\mathbf{1{,}350.9 \pm 144.2}$ \\
40 & $177{,}334.3 \pm 21{,}201.9$ & $\mathbf{2{,}157.4 \pm 208.2}$ \\
\bottomrule
\end{tabular}
}
\end{minipage}
\hfill
\begin{minipage}[t]{0.48\textwidth}
\centering
\caption{Rubik's cube success rates over 10 independent random seeds (mean $\pm$ SE). }
\label{tab:rubik-multiseed}
\renewcommand{\arraystretch}{2.21}
\resizebox{0.98\linewidth}{!}{
\begin{tabular}{c|ccc}
\toprule
Cube size & Non-equiv. & Canonical & $O$-equiv. \\
\midrule
$2\times2$ & $\mathbf{100.0 \pm 0.0\%}$ & $\mathbf{100.0 \pm 0.0\%}$ & $\mathbf{100.0 \pm 0.0\%}$ \\
$3\times3$ & $98.7 \pm 1.5\%$ & $\mathbf{99.3 \pm 1.2\%}$ & $99.0 \pm 1.7\%$ \\
$4\times4$ & $61.7 \pm 4.0\%$ & $\mathbf{97.0 \pm 3.0\%}$ & $95.0 \pm 2.6\%$ \\
$5\times5$ & $42.7 \pm 4.5\%$ & $\mathbf{83.7 \pm 4.5\%}$ & $81.0 \pm 2.6\%$ \\
\bottomrule
\end{tabular}
}
\end{minipage}

\end{table*}

\paragraph{Ablation Study.}
For the structured domains, we further evaluate the robustness of the observed gains over 10 independent random seeds. In the 1D cyclic ring, the results in Table~\ref{tab:ring-multiseed} show that the advantage of equivariant learning grows with $N$: while both methods perform similarly for small rings, the gap increases substantially with problem size, with the non-equivariant method requiring $177{,}334.3\pm21{,}201.9$ steps at $N=40$ compared with $2{,}157.4\pm208.2$ for the equivariant method. The same behavior is observed in the 3D cyclic ring, where the final reward is $-67.38\pm1.12$ SE for the non-equivariant method and $16.96\pm0.32$ SE for the equivariant method. Finally, Table~\ref{tab:rubik-multiseed} compares the non-equivariant baseline, canonicalization, and an $O$-equivariant MDP-homomorphic network on the Rubik's cube task. The $O$-equivariant method uses an MDP-homomorphic network~\cite{van2020mdp}. Both symmetry-aware approaches remain effective as the cube size increases and substantially outperform the non-equivariant baseline, while canonicalization performs comparably to or slightly better than the network-level equivariant approach. Together, these results show that the gains from exploiting symmetry persist across random seeds, become more pronounced as problem complexity grows, and are not specific to a particular realization of equivariance.

\begin{figure}[t!]
    \centering

    \begin{subfigure}[b]{0.49\textwidth}
        \centering
        \includegraphics[width=\linewidth]{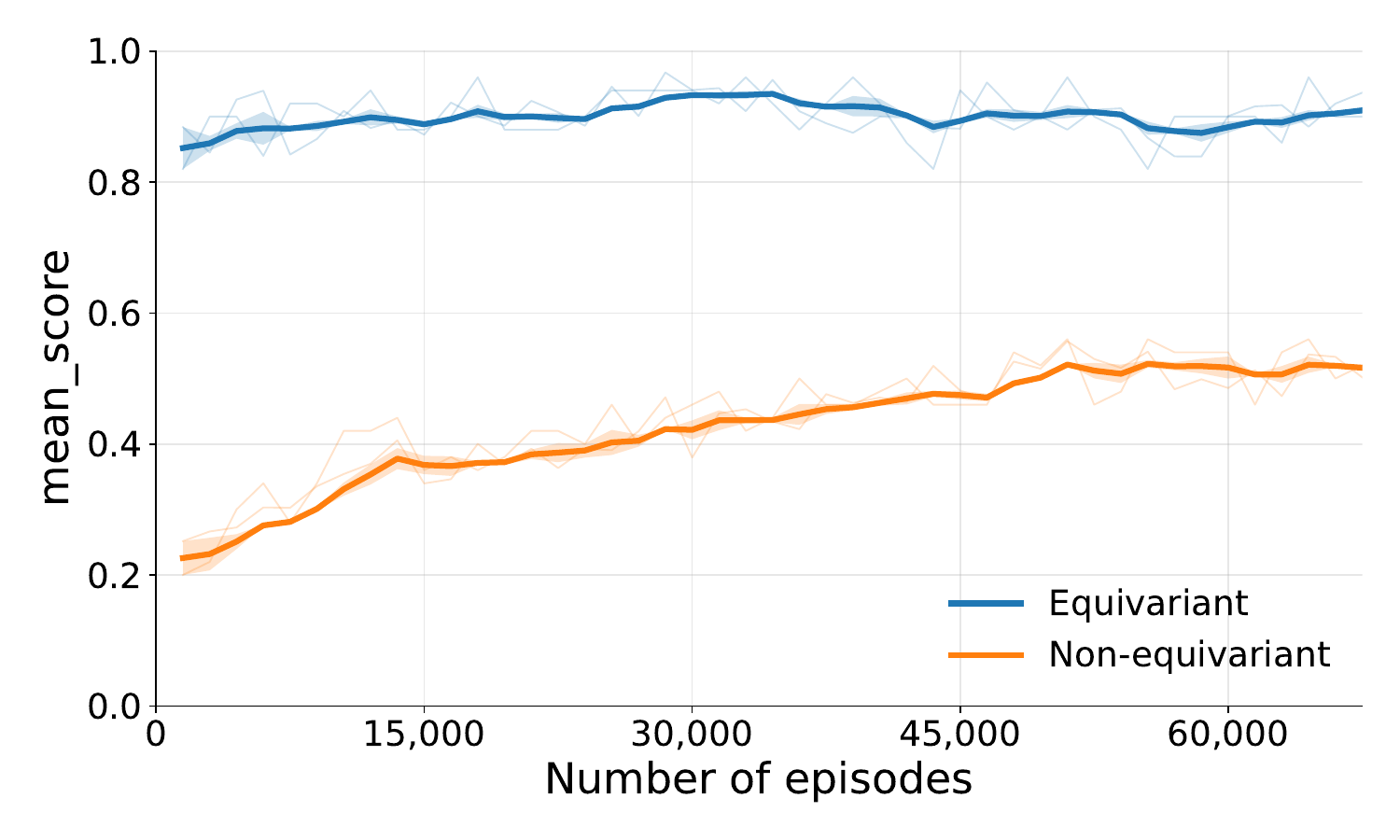}
        \caption{Block stacking benchmark.}
        \label{fig:stack}
    \end{subfigure}
    \hfill
    \begin{subfigure}[b]{0.49\textwidth}
        \centering
        \includegraphics[width=\linewidth]{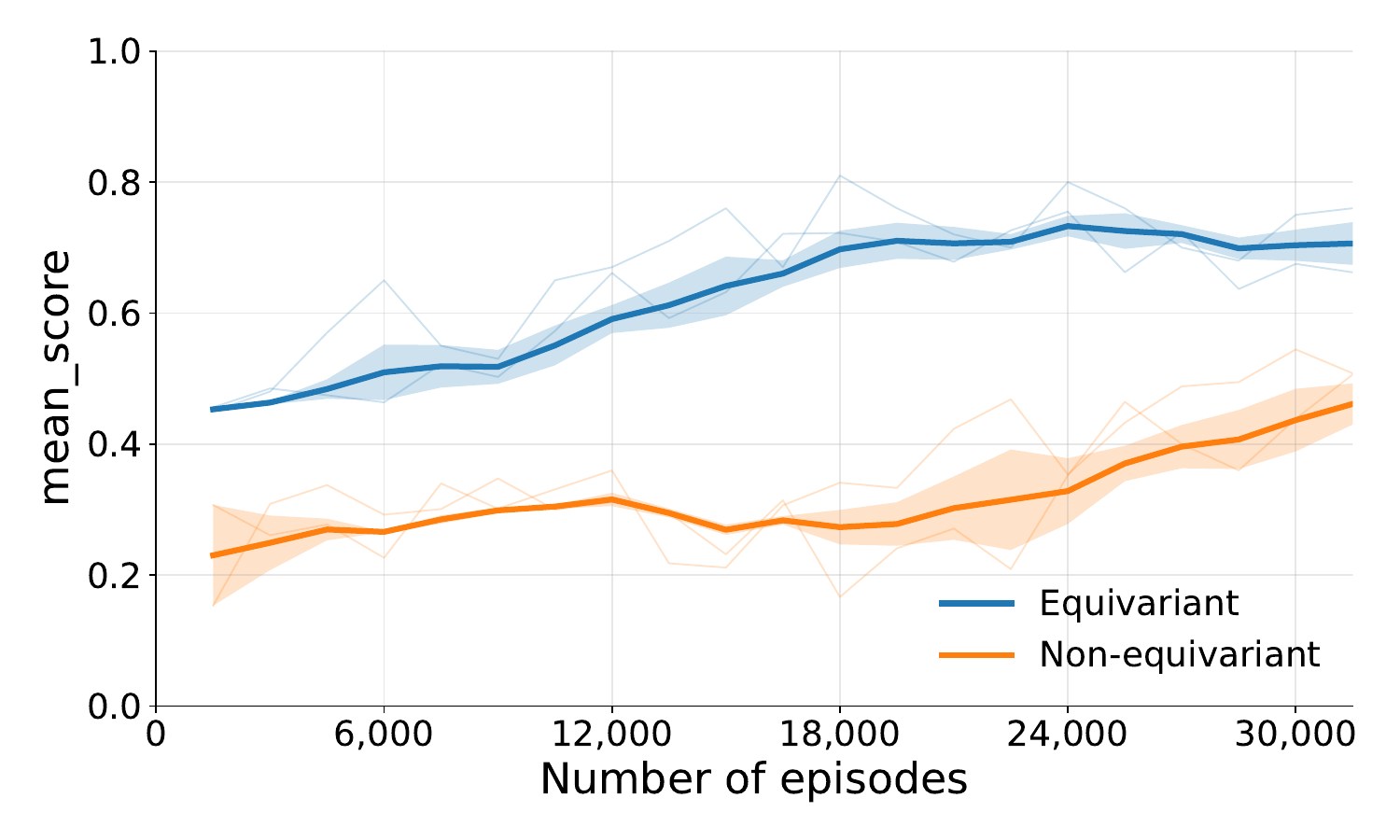}
        \caption{Nut assembly benchmark.}
        \label{fig:nut}
    \end{subfigure}

    \caption{Continuous-control manipulation benchmarks comparing symmetry-aware and non-equivariant learning across multiple random seeds.}
    \label{fig:rl-tasks}
\end{figure}

\paragraph{Continuous-Control Tasks.}
We additionally evaluate the framework on \emph{continuous-control} robotic manipulation benchmarks from~\cite{mandlekar2023mimicgen}. These tasks involve high-dimensional continuous observations and actions and contact-rich dynamics. In this setting, we model the relevant symmetry group as $\mathrm{SE}(3)$, the special Euclidean group in 3D. These experiments complement Theorem~\ref{th:continuous-sample-complexity}, which extends the symmetry-induced sample-complexity reduction to continuous state-action spaces through quotient covering numbers. In the block stacking benchmark (Fig.~\ref{fig:stack}), the agent must learn manipulation behavior that generalizes across symmetry-related object configurations. The non-equivariant baseline exhibits worse learning behavior, indicating that it must relearn similar behaviors across equivalent configurations. By contrast, the equivariant agent reaches useful performance levels earlier and achieves higher return, showing that the symmetry-aware representation improves learning efficiency. In the nut assembly benchmark (Fig.~\ref{fig:nut}), the task requires precise contact-rich control and alignment, making learning more sensitive to exploration and initialization. The same trend persists: the equivariant method consistently outperforms the non-equivariant baseline across training, with higher mean return. These results are consistent with Theorem~\ref{th:continuous-sample-complexity}, demonstrating that the benefits of symmetry-aware learning extend empirically to continuous-control domains.

\section{Conclusion}

This work studies how symmetry-induced structure changes the sample complexity of reinforcement learning. For finite MDPs, we formalize how graph-induced symmetries partition redundant state-action representations into orbits and show how the resulting quotient changes the effective size of the learning problem. We further extend this analysis to continuous state-action spaces, where symmetry reduces the effective complexity through the covering numbers of the corresponding quotient spaces. The resulting bounds demonstrate that symmetry-aware learning can improve sample efficiency in a way that scales with the amount of redundancy present in the environment. Our results show that integrating symmetry into the learning pipeline yields substantial gains in sample efficiency and performance, offering a principled path toward more data-efficient general-purpose robotics. Future work will extend this framework toward equivariant world models for and on-robot demonstrations.

\clearpage









\bibliography{cite}


\appendix

\section{Group-Invariant MDPs}
\label{app:groups}

We expand on the rationale for incorporating group symmetry, following the formulation of \cite{wangmathrm}. \\

\begin{definition}[{Group-Invariant MDP}]
  A group-invariant MDP is a tuple
$ M_G = (S, A, T, R, G),$
where $G$ is a group acting on both the state space $S$ and the action space $A$. For any $g \in G$, let $gs \in S$ denote the action of $g$ on state $s$, and $ga \in A$ denote the action of $g$ on action $a$. We require the following invariance properties:
\begin{enumerate}
  \item \emph{Reward invariance:} $R(s,a) = R(gs, ga)$ for all $s \in S, a \in A, g \in G$.
  \item \emph{Transition invariance:} $T(s,a) = s'$ implies $T(gs,ga) = g s'$ for all $s \in S, a \in A, g \in G$.
\end{enumerate}  
\end{definition}

\paragraph{Main result.}
Under these assumptions, the optimal solutions inherit group symmetry:

\begin{itemize}
  \item The optimal value function is group-invariant:
  \[
  V^*(gs) = V^*(s), \quad \forall s \in S, \; g \in G.
  \]
  \item The optimal Q-function is group-invariant:
  \[
  Q^*(s,a) = Q^*(gs,ga), \quad \forall (s,a) \in S \times A, \; g \in G.
  \]
  \item The optimal policy is group-equivariant:
  \[
  \pi^*(gs) = g \pi^*(s), \quad \forall s \in S, \; g \in G.
  \]
\end{itemize}

\paragraph{Proof sketch~\cite{wangmathrm}.}
 Consider the Bellman optimality equation for $Q^*$:
\[
Q^*(s,a) = R(s,a) + \gamma \max_{a'} Q^*(T(s,a), a').
\]
For the transformed pair $(gs,ga)$ we have
\[
Q^*(gs,ga) = R(gs,ga) + \gamma \max_{a'} Q^*(T(gs,ga), a').
\]
By reward and transition invariance, this reduces to
\[
Q^*(gs,ga) = R(s,a) + \gamma \max_{a'} Q^*(gT(s,a), ga').
\]
Re-indexing by $a'' = ga'$ and using the fact that $g$ permutes actions, this is identical to the equation for $Q^*(s,a)$. By uniqueness of Bellman solutions, $Q^*(gs,ga) = Q^*(s,a)$.

Now, since $V^*(s) = \max_a Q^*(s,a)$, it follows immediately that
\[
V^*(gs) = \max_a Q^*(gs,a) = \max_a Q^*(s,g^{-1}a) = V^*(s).
\]

Finally, the optimal policy is defined as
\[
\pi^*(s) = \arg\max_a Q^*(s,a).
\]
Using $Q^*$-invariance, we obtain
\[
\pi^*(gs) = \arg\max_a Q^*(gs,a) = g \arg\max_{a'} Q^*(s,a') = g\pi^*(s).
\]

Thus, for a group-invariant MDP, the optimal value function is invariant, the optimal $Q$-function is invariant, and the optimal policy is equivariant with respect to the group.

\section{PAC Framework}
\label{app:pac}

We outline the theoretical PAC implications of the episodic PAC framework in~\cite{dann2015sample}. The learner proceeds in phases indexed by $k$, where each phase is the interval between two model updates. At the start of phase $k$, the learner constructs visit counts, empirical transition estimates, and confidence sets over state-action configurations. 

Let $\mathcal{M}_k$ denote the corresponding confidence set of statistically plausible MDPs at phase $k$, constructed from empirical reward and transition estimates. The algorithm selects an optimistic model $\tilde{M}_k\in\mathcal{M}_k$, i.e., a model whose optimal value is maximal over the confidence set, computes a policy $\pi^k$, and executes it until the model is updated. The following PAC property is used to establish the sample-complexity bounds. \\

\begin{lemma}
With probability at least $1 - \delta/2$, the true MDP $M$ lies in $\mathcal{M}_k$ for all phases $k$.
\end{lemma}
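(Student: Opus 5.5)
The lemma to prove says that, with probability at least $1-\delta/2$, the true transition kernel $p$ lies in every confidence set $\mathcal{M}_k$ at once. I would prove it the standard UCFH way: concentration for one state-action pair and one visit count, then union bounds over everything else. Since the reward function is known (Definition~\ref{def:sc} setup), $\mathcal{M}_k$ constrains only the transition estimates. Fix a pair $(s,a)$, and on the quotient MDP $\bar M$ fix a pair $(\bar s,\bar a)$. Write $n=n_k(s,a)$ for the visit count at the start of phase $k$ and $\hat p_k(\cdot\mid s,a)$ for the empirical successor distribution.

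The main technical step is decoupling the samples from the adaptive visit counts. I would use the \emph{stack-of-rewards} (skeleton) construction. For each $(s,a)$, pre-draw an infinite i.i.d.\ sequence of successors from $p(\cdot\mid s,a)$, and let the $i$-th visit to $(s,a)$ consume the $i$-th element. This changes nothing in the law of the interaction. Under it, $\hat p_k(\cdot\mid s,a)$ is exactly the empirical distribution of the first $n$ i.i.d.\ draws, whatever the algorithm does. The confidence event can therefore be checked for each deterministic $n$ separately.

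For fixed $(s,a)$, fixed $n$ and a fixed successor $s'$ in $\mathrm{Succ}(s,a)$, I apply the empirical Bernstein inequality (Maurer--Pontil) together with Hoeffding. With probability at least $1-\delta'$, this gives
\[
|\hat p(s'\mid s,a)-p(s'\mid s,a)|\le \sqrt{\tfrac{2\hat\sigma^2\ln(6/\delta')}{n-1}}+\tfrac{7\ln(6/\delta')}{3(n-1)},
\]
plus the matching conditions on the variance proxy $\hat\sigma^2=\hat p(1-\hat p)$ that define $\mathcal{M}_k$ in \cite{dann2015sample}. I then union bound over $s'$ ($C$ terms), over $(s,a)$ ($|\s\times\A|$ terms), and over the counts. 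For the counts I would use the fact that model updates happen only when a count crosses a threshold in a geometric grid. Hence at most $O(|\s\times\A|\,\log_2(H\cdot\text{episodes}))$ distinct values of $n$ ever enter a $\mathcal{M}_k$; the simpler alternative is to union over all $n\in\naturals$ with $\delta'_n\propto\delta/n^2$. Either way, I choose $\delta'=\delta/(2C|\s\times\A|\,m)$, with $m$ the number of possible count values.

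With that choice the total failure probability is at most $\delta/2$. On the complementary event, $p$ satisfies every defining constraint of $\mathcal{M}_k$ for every phase $k$. The $\ln(1/\delta')$ this produces is exactly the logarithmic term hidden in $\tilde{\mathcal O}$ in Lemma~\ref{lem:homomorphism-sample-complexity}. On the quotient, the count becomes $C|\s\times\A|\to \bar C|\bar\s\times\bar\A|$.

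The main obstacle is the adaptivity and dependence: visit counts are random and depend on past samples, so naive concentration at a random $n$ is invalid. The skeleton construction plus the union over $n$ handles this. The one thing to verify carefully is that the equivariant or quotient learner, which pools samples across an orbit, still draws i.i.d.\ successors for $(\bar s,\bar a)$. This holds because $p(g\cdot c\mid g\cdot s,g\cdot a)=p(c\mid s,a)$ (Definition~\ref{def:graph_mdp_group_symmetry}), so mapped successors from any orbit member share the law $\bar p(\cdot\mid\bar s,\bar a)$.
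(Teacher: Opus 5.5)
Your proposal is correct and takes essentially the same route as the paper's sketch. You apply Maurer--Pontil empirical-Bernstein concentration for each fixed $(s,a)$, count, and successor $s'$, then take a union bound over successors, state-action pairs, and model updates. Your skeleton construction (pre-drawn i.i.d.\ successor stacks), with the union taken over a deterministic range of counts as in your fallback, makes rigorous the adaptivity issue that the paper's ``fixed phase $k$'' phrasing leaves implicit.
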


\begin{proof}[Proof Sketch]
For a fixed phase $k$, state-action pair $(s,a)$, and successor state $s'\in\mathrm{Succ}(s,a)$, concentration inequalities imply that $p(s'\mid s,a)$ lies in the confidence interval around $\hat{p}_k(s'\mid s,a)$ with high probability \cite{maurer2009empirical}. A union bound over successor states, state-action pairs, and model updates yields that the true MDP belongs to $\mathcal{M}_k$ for all phases with probability at least $1-\delta/2$.
\end{proof}

Intuitively, the lemma gives the high-probability event on which the true MDP remains plausible throughout learning.  For more details on the PAC derivation, we refer to \cite{dann2015sample}.

\section{Experimental Setup}
\label{app:setup}

We provide the experimental details needed to reproduce the results in the main paper. Illustrations of the structured domains are shown in Fig.~\ref{fig:experiment-illustrations}. We first report the compute budget for each setting, then describe the environments, symmetry groups, canonicalization procedures, model architectures, and optimization hyperparameters used in the 1D and 3D cyclic games, Rubik's Cube, and Mimicgen manipulation experiments.

\subsection{Compute}

All RL experiments in this paper train on a single NVIDIA A100 40GB GPU;
the 1D cyclic-game DQN sweep runs on CPU. The numbers below are per-run
budgets; multiply by the number of $(N, \text{condition}, \text{seed})$
combinations to get sweep totals.

\begin{figure}[t!]
    \centering
    \hspace{-8mm}
    \begin{tabular}{@{}c@{\hspace{0.02\textwidth}}c@{\hspace{0.02\textwidth}}c@{}}
        \begin{subfigure}[b]{0.20\textwidth}
            \centering
            \resizebox{\linewidth}{!}{%
                \usetikzlibrary{calc}

\begin{tikzpicture}[
    >=Latex,
    font=\Large,
    neutraldot/.style={circle, fill=gray!25, draw=black, minimum size=10pt, inner sep=0pt},
    baddot/.style={circle, fill=red!55, draw=black, minimum size=10pt, inner sep=0pt},
    rewarddot/.style={circle, fill=green!55, draw=black, minimum size=10pt, inner sep=0pt},
    ringedge/.style={draw=black, line width=2.2pt}
]

\def\R{1.35}

\draw[ringedge, fill=white] (0,0) circle[radius=\R];

\foreach \i in {0,...,9} {
    \coordinate (s\i) at ({90 - 36*\i}:\R);
}

\foreach \i in {0,1,3,4,5,6,8} {
    \node[neutraldot] at (s\i) {};
}

\foreach \i in {2,7} {
    \node[baddot] at (s\i) {};
}

\node[rewarddot] at (s9) {};

\begin{scope}[shift={(-2.8,-2.15)}]
    \node[neutraldot] at (0,0) {};
    \node[right=4pt] at (0,0) {neutral};

    \node[baddot] at (2.05,0) {};
    \node[right=4pt] at (2.05,0) {bad};

    \node[rewarddot] at (3.55,0) {};
    \node[right=4pt] at (3.55,0) {reward};
\end{scope}

\end{tikzpicture}
            }
        \end{subfigure}
        &
        \hspace{3mm}
        \begin{subfigure}[b]{0.31\textwidth}
            \centering
            \resizebox{\linewidth}{!}{%
                \begin{tikzpicture}[
    >=Latex,
    font=\Large,
    neutraldot/.style={circle, fill=gray!25, draw=black, minimum size=10pt, inner sep=0pt},
    baddot/.style={circle, fill=red!55, draw=black, minimum size=10pt, inner sep=0pt},
    rewarddot/.style={circle, fill=green!55, draw=black, minimum size=10pt, inner sep=0pt},
    ringedge/.style={draw=black, line width=2.2pt}
]

\newcommand{\drawnewring}{%
    \def\R{1.05}

    \draw[ringedge, fill=white] (0,0) circle[radius=\R];

    \foreach \i in {0,...,9} {
        \coordinate (s\i) at ({90 - 36*\i}:\R);
    }

    \foreach \i in {0,1,3,4,5,6,8} {
        \node[neutraldot] at (s\i) {};
    }

    \foreach \i in {2,7} {
        \node[baddot] at (s\i) {};
    }

    \node[rewarddot] at (s9) {};
}

\begin{scope}[shift={(-3.2,0)}, rotate=10]
    \drawnewring
\end{scope}

\node[font=\Large] at (-1.6,0) {$\times$};

\begin{scope}[shift={(0,0)}, rotate=40]
    \drawnewring
\end{scope}

\node[font=\Large] at (1.6,0) {$\times$};

\begin{scope}[shift={(3.2,0)}, rotate=90]
    \drawnewring
\end{scope}

\begin{scope}[shift={(-3.15,-2.35)}]
    \node[neutraldot] at (0,0) {};
    \node[right=4pt] at (0,0) {neutral};

    \node[baddot] at (2.05,0) {};
    \node[right=4pt] at (2.05,0) {bad};

    \node[rewarddot] at (3.55,0) {};
    \node[right=4pt] at (3.55,0) {reward};
\end{scope}

\end{tikzpicture}
            }
        \end{subfigure}
        &
        \begin{subfigure}[b]{0.16\textwidth}
            \centering
            \includegraphics[width=\linewidth]{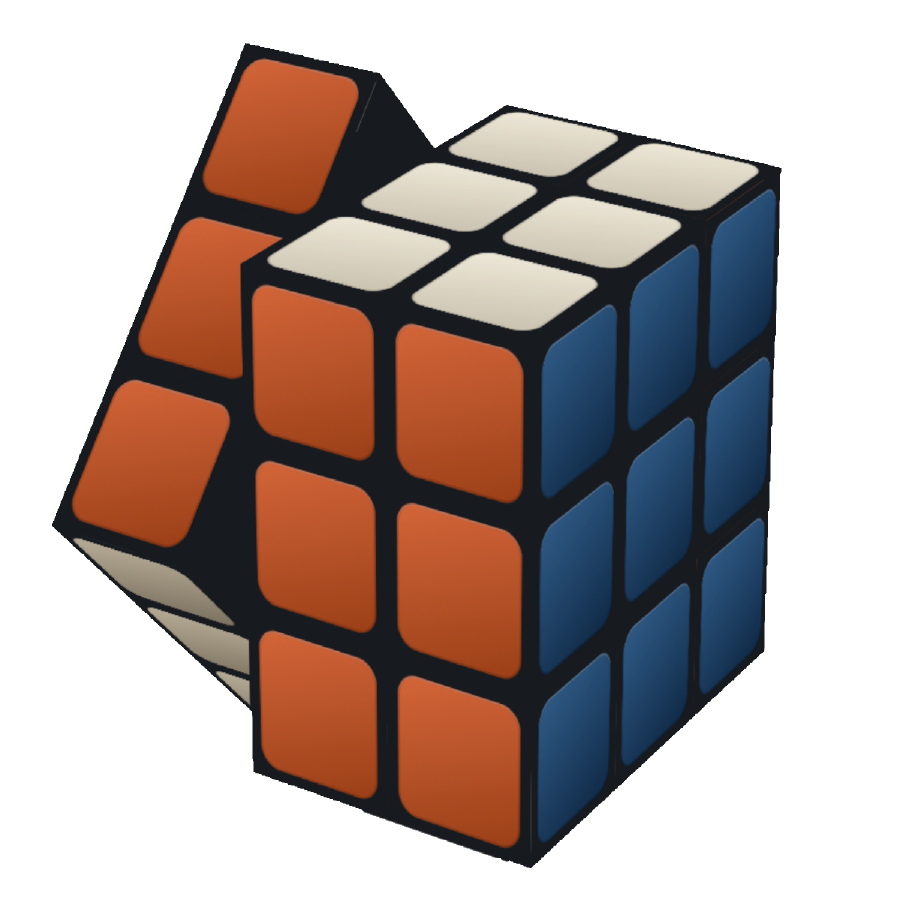}
        \end{subfigure}
    \end{tabular}

    \caption{Illustrations of the 1D cyclic ring, 3D cyclic ring, and Rubik's cube experiments.}
    \label{fig:experiment-illustrations}
\end{figure}

\paragraph{1D Cyclic Game.}
Each run ($10{,}000$ episodes) takes approximately $3$-$20$ CPU-hours,
scaling with $N$ via $\mathrm{max\_steps}=4N$.

\paragraph{3D Cyclic Game.}
Each run ($50{,}000$ episodes; one of $\{$DQN, PPO$\}$ $\times$
$\{$baseline, canonicalized$\}$) takes approximately $8$ A100-hours.

\paragraph{Rubik's Cube.}
Each $10{,}000$-epoch run takes approximately $20$ A100-hours; the
extended $N{=}5$ run at $50{,}000$ epochs scales accordingly.

\paragraph{Mimicgen PPO.}
Each run ($16$ parallel envs, $500$ episodes per training rollout,
$2\!\times\!10^{6}$ timesteps) takes approximately $76$ A100-hours.

\subsection{Experiments}

\subsubsection{1D Cyclic Game}
\label{app:exp:1d}

\paragraph{Environment.}
The 1D cyclic game is the single-axis restriction of the construction in
Section~\ref{app:exp:3d}: a ring $\mathcal{S}=\mathbb{Z}/N\mathbb{Z}$ of $N$
cells, each holding an item $b_c\in\{0,-1,E\}$ (safe, mine, end). At reset,
$b_c=-1$ with probability $p$ and $b_c=0$ otherwise; a single end-spot
$c^\star\in\mathcal{S}$ is sampled uniformly with reward $E>0$. The action
space is $\mathcal{A}=\{-2,-1,1,2\}$ ($|\mathcal{A}|=4$), with deterministic
cyclic transitions $s_{t+1}=(s_t+a_t)\bmod N$. Episodes end when the agent
reaches $c^\star$ or after $\mathrm{max\_steps}=4N$ steps. We fix $p=0.7$ and
$E=10$.

\paragraph{Symmetry group and canonicalization.}
The translation group $G=\mathbb{Z}/N\mathbb{Z}$ acts on $\mathcal{S}$ by
$(g\!\cdot\! b)_c = b_{c+g\,\mathrm{mod}\, N}$ and trivially on actions, so
the orbit-reduction factor is $|G|=N$. As in the 3D case, the canonical
environment rolls the items vector by $-c^\star$ on every \texttt{reset},
fixing the goal at the origin and shifting the agent's start to
$(-c^\star)\bmod N$. The policy observes the concatenation
$(b\,\Vert\,\mathrm{onehot}(\mathrm{pos}))\in\mathbb{R}^{2N}$.

\paragraph{Sweep over $N$.}
We sweep $N\in\{5,10,15,20,25,30\}$ to study how sample complexity scales
with state-space size, training the baseline and canonicalized variants
under the same seed at each $N$. All runs use seed $42$ and $10{,}000$
episodes per $(N,\text{condition})$ pair.

\paragraph{Algorithm.}
Both variants use the same MLP Q-network ($2N\!\to\!128\!\to\!128\!\to\!4$,
ReLU). We train DQN ($\gamma{=}0.99$, replay buffer $10{,}000$, batch $64$,
target update every $10$ episodes, $\epsilon$ from $1.0$ to $0.05$ with
decay $0.997$, Adam lr $10^{-3}$).

\subsubsection{3D Cyclic Game}
\label{app:exp:3d}

\paragraph{Environment.}
The 3D cyclic game is a discrete grid-world living on the 3-torus
$\mathcal{S}=(\mathbb{Z}/N\mathbb{Z})^3$. Each cell $c\in\mathcal{S}$ holds an item
$b_c\in\{0,-1,E\}$ (safe, mine, end), drawn at reset time as
$b_c=-1$ with probability $p$ and otherwise $b_c=0$, with a single end-spot
$c^\star\in\mathcal{S}$ placed uniformly at random and reward $E>0$ assigned to
that cell. The action space is $\mathcal{A}=\{-2,-1,1,2\}^3$ with $|\mathcal{A}|=64$,
and transitions are deterministic axis-wise cyclic shifts
$s_{t+1}=(s_t+a_t)\bmod N$. An episode terminates when the agent reaches $c^\star$ or
exceeds the step limit. We use $d=3$, $N=5$, $p=0.7$,
$E=20$, $\mathrm{max\_steps}=100$ for evaluation and $100$ during training.

\paragraph{Symmetry group.}
The translation group $G=(\mathbb{Z}/N\mathbb{Z})^3$ acts on $\mathcal{S}$ by
$(g\!\cdot\! b)_c = b_{c+g\,\mathrm{mod}\, N}$ and on actions by the trivial
representation; reward and dynamics are $G$-invariant. With end-spot position included
in the state, every game admits a unique transversal under $G$, so
$|\bar{\mathcal{S}}|=|\mathcal{S}|/|G|$ and the orbit-reduction factor is exactly
$|G|=N^3$ (e.g.\ $1.25\times 10^{5}$ for $N=50$).

\paragraph{Canonicalization.}
Symmetry is exploited \emph{at the environment level} rather than through an
equivariant network. At every \texttt{reset}, the canonical environment translates
the items grid by $-c^\star$ along each axis, fixing the goal at the origin and
rotating the agent's start position accordingly:
\[
b'\;=\;\textstyle\prod_{a=0}^{2}\mathrm{roll}(b,\,-c^\star_a,\,\text{axis}=a),
\qquad c^{\star\,\prime}=(0,0,0),
\qquad s_0' = (0-c^\star)\,\mathrm{mod}\,N.
\]

\paragraph{Algorithms.}
Both the baseline and canonicalized variants share the same architecture and
optimizer. The Q-network and actor-critic both use a 3-layer 3D-CNN encoder
($1\!\to\!16\!\to\!32\!\to\!64$ channels with adaptive average pooling) producing a
$128$-d feature, concatenated with the normalized agent position
$\mathrm{pos}/N\in[0,1]^3$, followed by a $(256,256)$ MLP head over $|\mathcal{A}|=64$
discrete actions. We train DQN ($\gamma{=}0.99$, replay buffer $5{,}000$, batch $64$,
target update every $10$ episodes, $\epsilon$ from $1.0$ to $0.05$, decay $0.995$,
Adam lr $10^{-3}$) and PPO ($\gamma{=}0.99$, $\lambda{=}0.95$, clip $0.2$, $4$ epochs,
batch $64$, rollout $512$, entropy $0.1$, Adam lr $10^{-4}$, grad clip $0.5$) for
$50{,}000$ episodes per condition with seed $42$.

\subsubsection{Rubik's Cube}
\label{app:exp:rubik}

\paragraph{Environment.}
We study the standard $N\!\times\! N\!\times\! N$ Rubik's cube for $N\in\{2,3,4,5,6\}$
with $12$ quarter-turn face moves $\{U,U',F,F',L,L',D,D',B,B',R,R'\}$, encoded as a
length-$6N^2$ sticker tensor with $6$ color labels and one-hot input dimension $36N^2$.
Training data are scrambles of length up to $30$ generated uniformly from the action
set; evaluation uses $50$ fresh scrambles per checkpoint with a fixed seed
 so each depth $d$ is tested on the
same scrambles every epoch.

\paragraph{Symmetry group.}
The cube admits the full rotational octahedral group $O_h\cong S_4$, of order
$|O_h|=24$, acting on the sticker tensor by permutations that simultaneously remap
faces and rotate within faces. We generate this group as
$O_h=\langle x,y\rangle$ with $x=R\!\circ\! L'$ and $y=U\!\circ\! D'$ — two perpendicular
$90^\circ$ whole-cube rotations — and BFS all $24$ elements at $N{=}2$
(\texttt{environment/cubeN.py}, \texttt{\_precompute\_rotations}). For general
$N$, each rotation is lifted by extracting (face-map, within-face inverse-rotation)
from its $N{=}2$ permutation and rebuilding the action on the $N\!\times\! N$ grid,
yielding a precomputed tensor \texttt{rotation\_perms} of shape $(24,6N^2)$.

\paragraph{Canonicalization.}
As illustrated in Fig.~\ref{fig:rubik-canon-schematic}, on every batch the environment computes
\[
r(s)\;=\;\arg\min_{g\in O_h}\;\mathrm{lex}\bigl(g\!\cdot\! s\bigr),
\]
i.e.\ the lexicographically smallest sticker string among the $24$ rotated copies
of $s$. This is implemented as a batched tensor op:
\texttt{rotated = states[:, rotation\_perms]} produces all $24$ rotated views
simultaneously, after which a chunked base-$6$ hash with progressive tie-breaking
($13$ stickers per chunk, $6^{13}\!\approx\!1.3\!\times\!10^{10}$ fits in
\texttt{int64}) selects the canonical index per state without ever forming the full
length-$6N^2$ key. The same canonicalization is applied to (i) the scrambled inputs,
(ii) all explored next states inside the DAVI target update, and (iii) the test-time
scrambles before A$^\star$ search. Because $r(g\!\cdot\! s)=r(s)$ for
all $g\in O_h$, the value head $V_\theta\circ r$ is automatically $O_h$-invariant
without any change to the network architecture.

\begin{figure}[t!]
    \centering
    \includegraphics[width=0.98\linewidth]{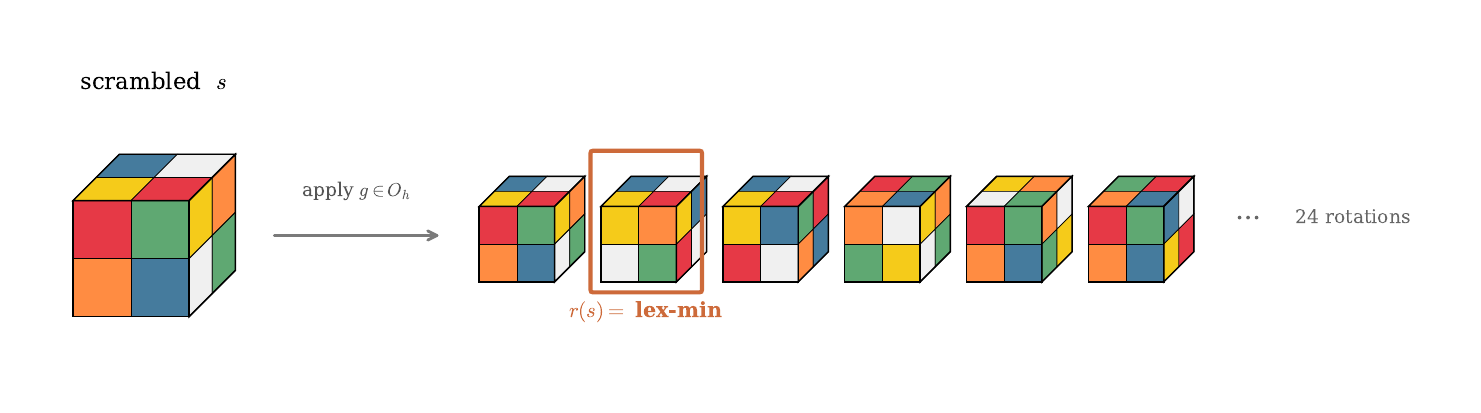}
    \caption{Rubik's cube canonicalization. A scrambled state $s$ (left) is replaced
    by the lexicographically smallest representative of its orbit
    $\mathcal{O}(s)=\{g\!\cdot\! s : g\in O_h\}$ under the $24$-element rotation group
    $O_h\cong S_4$}
    \label{fig:rubik-canon-schematic}
\end{figure}

\paragraph{Architecture and training.}
Both baseline and canonicalized variants share the network and hyperparameters from
\citep{agostinelli2019solving}.

\subsubsection{Mimicgen PPO with Flow-Matching Policy}
\label{app:exp:mimicgen}

\paragraph{Environment.}
We RL-finetune a pretrained flow-matching manipulation policy on Mimicgen
\citep{mandlekar2023mimicgen} tasks, namely \texttt{nut\_assembly\_d0}
and \texttt{stack\_three\_d1}. The action space is $10$-dimensional (xyz position $+$ rot6d
orientation $+$ gripper) with absolute pose control. Observations are a
wrist RGB image at $84{\times}84$, a $1024$-point colored point cloud,
end-effector pose (\texttt{eef\_pos}, \texttt{eef\_quat}), and gripper
qpos. The policy operates over a chunk of $H=16$ timesteps with
$n_{\mathrm{obs}}=2$ observation steps and $n_{\mathrm{act}}=8$ executed
action steps per call. We run $16$ parallel environments and evaluate on
$50$ held-out scenes per checkpoint.

\paragraph{Policy.}
The policy is the ManiFlow point-cloud architecture
\citep{yan2025maniflow}: a
\texttt{DP3Encoder} (PointNet-style, $128$-d feature) feeds a DiTX
transformer trunk ($12$ layers, $8$ heads, embed dim $768$, visual
condition length $1024$, with QK-norm) that parameterizes a $4$-step
rectified flow. Inference uses an SDE sampler with exploration noise
$\sigma=0.3$, so each denoising step produces a Gaussian transition
$x_{k+1}\!\sim\!\mathcal{N}(\mu_k(x_k,t_k),\,\sigma_k^2 I)$, and the
log-probability of an action chain is
\[
\log\pi(a\mid s)\;=\;\sum_{k=0}^{3}\log\mathcal{N}\bigl(x_{k+1};\,\mu_k,\sigma_k^2 I\bigr).
\]
A separate value head (AttnPool over the visual condition followed by an
MLP) shares the encoder. We initialize all weights from an
imitation-learning checkpoint and only update the DiTX trunk and value
head during RL.

\paragraph{Canonicalization.}
We retain the SE(3) canonicalization used during pretraining: the
environment expresses observations and actions in a canonical frame
defined by the robot pose.

\paragraph{Reward.}
We use the environment's binary task-success signal
$r^{\mathrm{env}}_t \in \{0,1\}$ scaled by a fixed bonus $\alpha = 10$:
\[
r_t \;=\; \alpha\, r^{\mathrm{env}}_t.
\]

\paragraph{PPO.}
We use a Gaussian PPO over the SDE chain
(\texttt{loss\_mode = "gaussian"}), in line with recent online RL fine-tuning
of flow-based policies \citep{chen2025pirl}. The importance ratio is
recomputed by replaying the stored chains $(x_0,\dots,x_4)$ through the
current DiTX:
\[
\rho \;=\; \exp\!\Bigl(\textstyle\sum_{k=0}^{3}\bigl[
  \log\mathcal{N}(x_{k+1};\mu_k^{\mathrm{new}},\sigma_k^2 I)
  -\log\mathcal{N}(x_{k+1};\mu_k^{\mathrm{old}},\sigma_k^2 I)\bigr]\Bigr),
\]
and the clipped surrogate uses $\epsilon=0.2$
\citep{schulman2017proximal}. GAE uses $\gamma=0.99$, $\lambda=0.95$, with
advantage normalization. The value loss is clipped to the same $0.2$
range with coefficient $0.5$. The entropy coefficient is $0$; we
early-stop the update when KL exceeds $0.03$. Each rollout collects $500$
episodes, followed by a single PPO epoch over that batch. We run $4$
critic-warmup rollouts ($3$ epochs each, value head only) before
unfreezing the actor.

\paragraph{Optimization.}
AdamW with actor lr $10^{-5}$ and critic lr $2\!\times\!10^{-4}$, linear
schedule with $10{,}000$ warmup steps, gradient-norm clipping at $0.5$,
minibatch size $16$, and gradient accumulation $128$ (effective batch
$2048$). The total budget is $2\!\times\!10^{6}$ environment steps;
evaluation runs every $3$ rollouts and we checkpoint every $100$
rollouts. Seed $42$.

\begin{figure}[t!]
    \centering

    \begin{subfigure}[b]{0.49\textwidth}
        \centering
        \hspace{5mm}
        \includegraphics[width=0.9\linewidth]{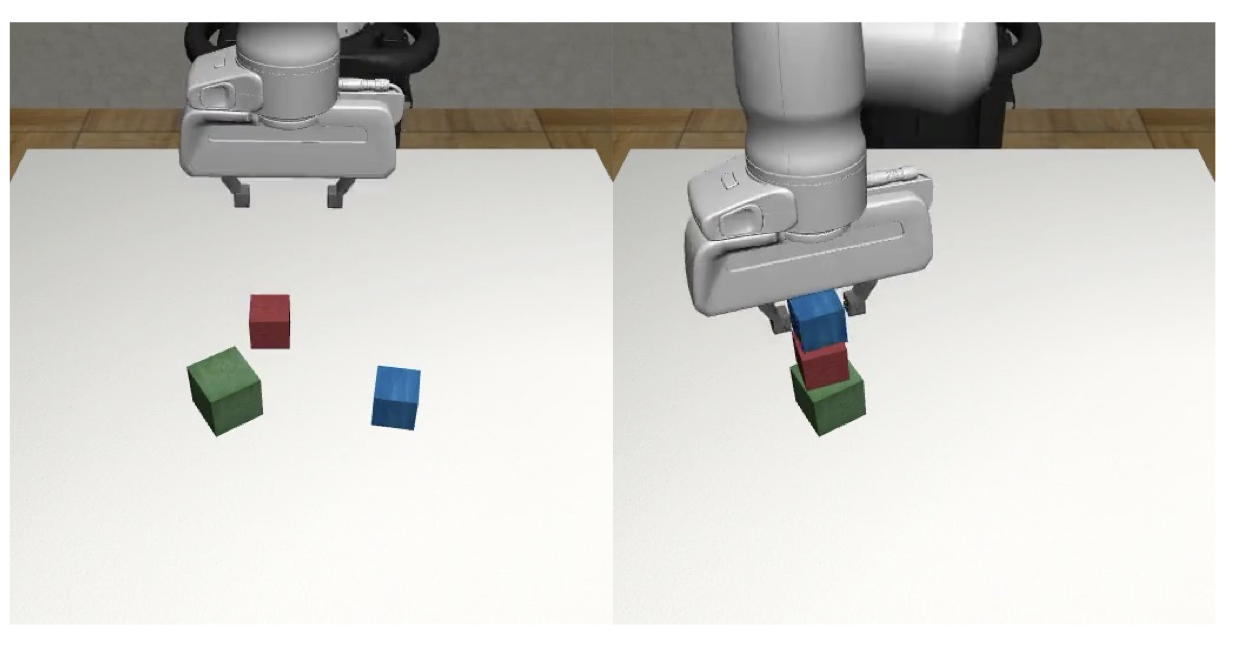}
    \end{subfigure}
    \hfill
    \begin{subfigure}[b]{0.49\textwidth}
        \centering
        \hspace{5mm}
        \includegraphics[width=0.9\linewidth]{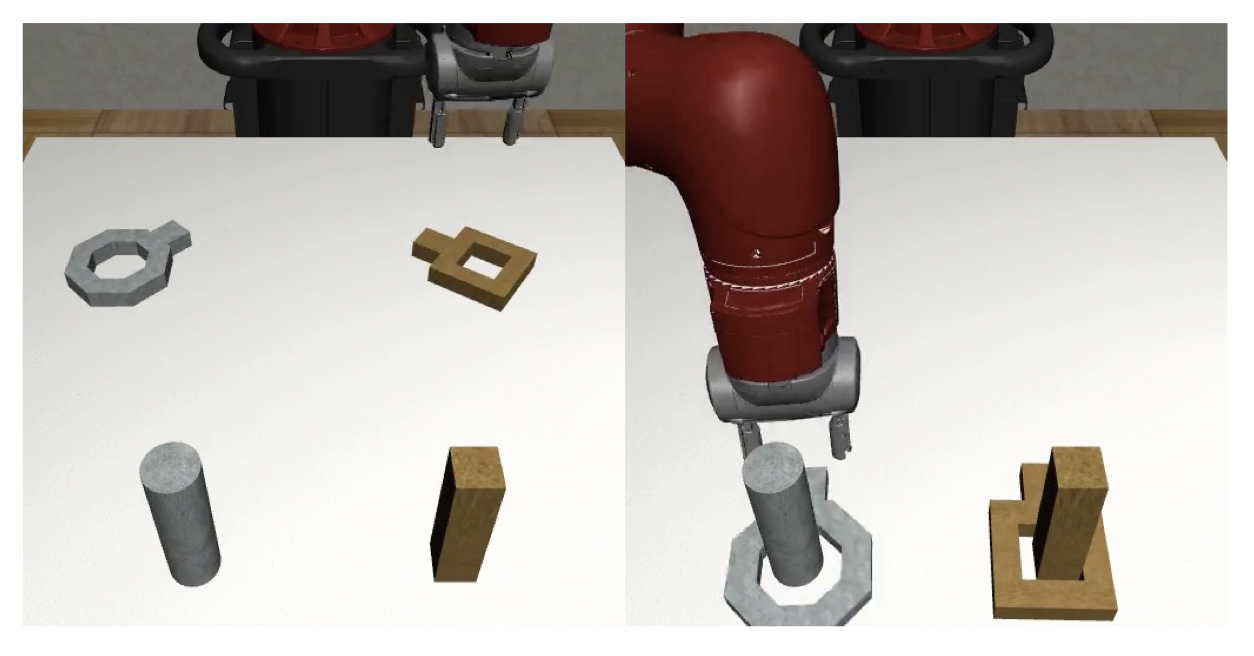}
    \end{subfigure}

    \caption{Illustrations of the block stacking and nut assembly robotic manipulation benchmarks.}
    \label{fig:rl-task-illustrations}
\end{figure}



\end{document}